\newtheorem{theorem}{Theorem}            % 不按节标号
\newtheorem{thm}{Theorem}
\newtheorem{lemma}{Lemma}
\newtheorem{Example}{Example}
\begin{document}
\date{}
\title{Constructions of MDS symbol-pair codes with minimum distance seven or eight}
\maketitle
\begin{center}
\author{\large Junru Ma \quad\quad Jinquan Luo
\footnote{Corresponding author
\par
Junru Ma is with Faculty of Mathematics and Statistics, Hubei Key Laboratory of Applied Mathematics, Hubei University, Wuhan 430062, Hubei, China.
\par
Jinquan Luo is with School of Mathematics and Statistics, Hubei Key Laboratory of Mathematical Sciences, Central China Normal University, Wuhan 430079, Hubei, China.\\
\par
E-mails: majunru@hubu.edu.cn(J.Ma), luojinquan@mail.ccnu.edu.cn(J.Luo).
}}
\end{center}

\begin{quote}
  {\small {\bf Abstract:} \ \
  Symbol-pair codes are proposed to guard against pair-errors in symbol-pair read channels.
  The minimum symbol-pair distance plays a vital role in determining the error-correcting capability and the constructions of symbol-pair codes with largest possible minimum symbol-pair distance is of great importance.
  Maximum distance separable (\,MDS\,) symbol-pair codes are optimal in the sense that such codes can acheive the Singleton bound.
  In this paper, for length $5p$, two new classes of MDS symbol-pair codes with minimum symbol-pair distance seven or eight are constructed by utilizing repeated-root cyclic codes over $\mathbb{F}_{p}$, where $p$ is a prime.
  In addition, we derive a class of MDS symbol-pair codes with minimum symbol-pair distance seven and length $4p$.
  }

  {\small {\bf Keywords:} \ \ MDS symbol-pair codes, \ minimum symbol-pair distance, \ constacyclic codes, \ repeated-root cyclic codes}
\end{quote}

\section{Introduction}

With the development of modern high density data storage systems, symbol-pair code was proposed by Cassuto and Blaum to combat against pair-errors over symbol-pair read channels in \cite{CB1,CB2}.
They also showed that a code $\mathcal{C}$ with minimum symbol-pair distance $d_{p}$ can correct up to $\left\lfloor(d_{p}-1)/2\right\rfloor$ symbol-pair errors \cite{CB1,CB2}.
Later, Cassuto and Litsyn \cite{CL} showed that codes for correcting pair-errors exist with strictly higher rates compared to codes for the Hamming metric with the same relative distance.
In \cite{CKW}, Chee, Kiah and Wang established a Singleton-type bound on symbol-pair codes.
Similar to classical codes, symbol-pair codes meeting the Singleton-type bound are called MDS symbol-pair codes
and the error-correcting capability of MDS symbol-pair codes is optimal.
Later, Ding, Zhang and Ge extended the Singleton-type bound to the $b$-symbol case in \cite{DZGb}.

Many attempts have been made in the constructions of MDS symbol-pair codes.
In \cite{KZL}, Kai, Zhu and Li provided MDS symbol-pair codes with length $q^2+q+1$ through constacyclic codes over $\mathbb{F}_{q}$.
Later, Li and Ge \cite{LG} generalized the results in \cite{KZL} and they also constructed a number of MDS symbol-pair codes with minimum symbol-pair distance seven by analyzing certain linear fractional transformations.
Shortly afterwards, Chen, Lin and Liu \cite{CLL} constructed several MDS symbol-pair codes with length $3p$ from repeated-root cyclic codes over $\mathbb{F}_{p}$.
In $2018$, Ding et al. \cite{DGZZZ} obtained some MDS symbol-pair codes over $\mathbb{F}_{q}$ with larger minimum symbol-pair distance based on elliptic curves and the lengths of these codes are bounded by $q+2\sqrt{q}$.
In the same year, Kai et al. \cite{KZZLC} constructed three classes of MDS symbol-pair codes using repeated-root constacyclic codes over $\mathbb{F}_{p}$, see Table \ref{Tab-constacyclic-MDS}.
Recently, some new results on constructing symbol-pair codes were presented in \cite{DNS,EGY,ML1}.
Moreover, some decoding algorithms of symbol-pair codes were proposed by various researchers in \cite{HMH,LXY,MHT,THM,YBS} and the symbol-pair weight distributions of some linear codes over finite fields were studied in \cite{DNSS1,DNSS2,DWLS,ML2,SZW} and the references therein.

In Table \ref{Tab-constacyclic-MDS}, we summarize some known MDS symbol-pair codes from constacyclic codes.
\begin{table}[!htb]
  \caption{Some known MDS symbol-pair codes from constacyclic codes}\label{tab:tablenotes}
  \centering
  \begin{threeparttable}\label{Tab-constacyclic-MDS}
  \begin{tabular}{|c|c|c|}
  \hline Values of $(n,\,d_{p})_{q}$      &Conditions                &References\\
  \hline $\left(n,\,5\right)_{q}$       &$n\,|\left(q^{2}+q+1\right)$
    &\cite{KZL},\cite{LG} \\
  \hline $\left(n,\,6\right)_{q}$       &$n\,|\left(q^{2}+1\right)$
    &\cite{KZL},\cite{LG} \\
  \hline $\left(n,\,6\right)_{q}$
    &$n\,|\left(q^{2}-1\right)$,\,$n$\,odd or $n$ even and $v_{2}\left(n\right)<v_{2}\left(q^{2}-1\right)$     &\cite{LG} \\
  \hline $\left(n,\,6\right)_{q}$
    &$q\geq3,\,n\geq q+4,\,n\,|\left(q^{2}-1\right)$    &\cite{CLL} \\
  \hline $\left(lp,\,5\right)_{p}$
    &$p\geq5,\,l>2,\,{\rm gcd}\left(l,\,p\right)=1,\,l\,|\left(p-1\right)$
    &\cite{CLL} \\
  \hline $\left(p^{2}+p,\,6\right)_{p}$    &$p\geq3$          &\cite{KZZLC} \\
  \hline $\left(2p^{2}-2p,\,6\right)_{p}$  &$p\geq3$          &\cite{KZZLC} \\
  \hline $\left(3p,\,6\right)_{p}$         &$p\geq5$          &\cite{CLL}   \\
  \hline $\left(3p,\,7\right)_{p}$         &$p\geq5$          &\cite{CLL}   \\
  \hline $\left(3p,\,8\right)_{p}$         &$3\,|\left(p-1\right)$ &\cite{CLL} \\
  \hline $\left(3p,\,10\right)_{p}$        &$3\,|\left(p-1\right)$     &\cite{ML1}   \\
  \hline $\left(3p,\,12\right)_{p}$        &$3\,|\left(p-1\right)$     &\cite{ML1} \\
  \hline $\left(4p,\,7\right)_{p}$         &$p\equiv 3\left({\rm mod}\,4\right)$
    &\cite{KZZLC} \\
  \hline $\left(4p,\,7\right)_{p}$         &$p\equiv 1\left({\rm mod}\,4\right)$
    &Theorem \ref{thmMDS4p7p} \\
  \hline $\left(5p,\,7\right)_{p}$         &$5\,|\left(p-1\right)$, $p\ne 41$
    &Theorem \ref{thmMDS5p7p} \\
  \hline $\left(5p,\,8\right)_{p}$         &$5\,|\left(p-1\right)$
    &Theorem \ref{thmMDS5p8p} \\
  \hline
  \end{tabular}
  \begin{tablenotes}
        \footnotesize
          \item  where $q$ is a power of a prime $p$.
  \end{tablenotes}
  \end{threeparttable}
\end{table}
Observe that there exists only one class of codes with length $5p$ and minimum symbol-pair distance five in Table \ref{Tab-constacyclic-MDS}.
The constructions of symbol-pair codes with comparatively large minimum symbol-pair distance is an interesting topic.
This paper focuses on the further constructions of MDS symbol-pair codes with length $5p$.
Precisely, two new classes of MDS symbol-pair codes with minimum symbol-pair distance seven or eight are constructed by utilizing repeated-root cyclic codes over $\mathbb{F}_{p}$.
In addition, for $n=4p$, we derive a class of MDS symbol-pair codes with $d_{p}=7$, which generalizes the result in \cite{KZZLC}.

The remainder of this paper is organized as follows.
In Section $2$, we introduce some basic notation and results on symbol-pair codes and constacyclic codes.
By exploiting repeated-root cyclic codes, for length $5p$, two new classes of MDS symbol-pair codes with minimum symbol-pair distance seven or eight are constructed in Section $3.1$ and a class of MDS symbol-pair codes with length $4p$ is presented in Section $3.2$.
Section $4$ concludes the paper.

\section{Preliminaries}

In this section, we introduce some notation and auxiliary tools on symbol-pair codes and constacyclic codes, which will be used to prove our main results in the sequel.

\subsection{Symbol-pair codes}

Let $\mathbb{F}_{q}$ be the finite field with $q$ elements, where $q$ is a prime power.
Let $n$ be a positive integer and $\mathbf{x}=\left(x_{0}, x_{1}, \cdots, x_{n-1}\right)$ be a vector in $\mathbb{F}_{q}^{n}$.
Then the {\it symbol-pair read vector} of $\mathbf{x}$ is
\begin{equation*}
  \pi(\mathbf{x})=\left(\left(x_{0},\,x_{1}\right),
  \,\left(x_{1},\,x_{2}\right),\cdots,\left(x_{n-2},\,x_{n-1}\right),
  \,\left(x_{n-1},\,x_{0}\right)\right).
\end{equation*}
Obviously, each vector $\mathbf{x}\in \mathbb{F}_{q}^{n}$ has a unique pair representation $\pi(\mathbf{x})$.
Recall that the {\it Hamming weight} of $\mathbf{x}$ is
\begin{equation*}
  w_{H}(\mathbf{x})=\left|\left\{i\in \mathbb{Z}_{n}\,\big|\,x_{i}\neq 0\right\}\right|
\end{equation*}
where $\mathbb{Z}_{n}$ denotes the residue class ring $\mathbb{Z}/n\mathbb{Z}$.
Correspondingly, the {\it symbol-pair weight} of $\mathbf{x}$ is
\begin{equation*}
  w_{p}(\mathbf{x})=\left|\left\{i\in \mathbb{Z}_{n}\,\big|\,\left(x_{i},\,x_{i+1}\right)\neq \left(0,\,0\right)\right\}\right|.
\end{equation*}
For any two vectors $\mathbf{x},\,\mathbf{y}\in \mathbb{F}_{q}^{n}$, the {\it symbol-pair distance} from $\mathbf{x}$ to $\mathbf{y}$ is defined as
\begin{equation*}
  d_{p}\left(\mathbf{x},\,\mathbf{y}\right)=\left|\left\{i\in \mathbb{Z}_{n}\,\big|\,\left(x_{i},\,x_{i+1}\right)\neq \left(y_{i},\,y_{i+1}\right)\right\}\right|.
\end{equation*}
A code $\mathcal{C}$ over $\mathbb{F}_{q}$ of length $n$ is a nonempty subsets of $\mathbb{F}_{q}^{n}$.
Elements of $\mathcal{C}$ are called {\it codewords} in $\mathcal{C}$.
The {\it minimum symbol-pair distance} of $\mathcal{C}$ is
\begin{equation*}
  d_{p}(\mathcal{C})={\rm min}\left\{d_{p}\left(\mathbf{x},\,\mathbf{y}\right)\,\big|\,\, \mathbf{x},\,\mathbf{y}\in \mathcal{C},\mathbf{x}\neq\mathbf{y}\right\}
\end{equation*}
and we refer such a code as an $\left(n,\,d_{p}(\mathcal{C})\right)_{q}$ {\it symbol-pair code}.
A well-known relationship between $d_{H}(\mathcal{C})$ and $d_{p}(\mathcal{C})$ in \cite{CB1,CB2} states that for any $0<d_{H}(\mathcal{C})<n$,
$$d_{H}(\mathcal{C})+1\leq d_{p}(\mathcal{C})\leq 2\cdot d_{H}(\mathcal{C}).$$

The following lemma reveals a connection between the symbol-pair distance and the Hamming distance of a code $\mathcal{C}$.

\begin{lemma}{\rm(\cite{CB1,CB2})}\label{lemdpdH}
For any $\mathbf{x},\mathbf{y}\in \mathcal{C}$ with $\mathbf{x}=(x_0,\cdots,x_{n-1})$ and $\mathbf{y}=(y_0,\cdots,y_{n-1})$.
Define $S=\{i\in \mathbb{Z}_{n}\,|\,x_i\ne y_i\}$.
Let $S=\bigcup _{i=1}^{L}S_i$ be a partition of $S$, which satisfies:

(1) the elements of each subset $S_i$ are consecutive in the sense of modulo $n$;

(2) for any different $i,\,j\in [1,L]$ and $a\in S_{i},\,b\in S_{j}$, $a$ and $b$ are not consecutive.
\\
Then
\begin{equation*}
  d_{p}\left(\mathbf{x},\,\mathbf{y}\right)=d_{H}\left(\mathbf{x},\,\mathbf{y}\right)+L.
\end{equation*}
\end{lemma}

In contrast to classical error-correcting codes, the size of symbol-pair codes satisfies the following Singleton bound.

\begin{lemma}{\rm(\cite{CJKWY})}\label{Singleton}
Let $q\geq 2$ and $2\leq d_{p}\leq n$.
If $\mathcal{C}$ is a symbol-pair code with length $n$ and minimum symbol-pair distance $d_{p}$, then $\left|\mathcal{C}\right|\leq q^{n-d_{p}+2}$.
\end{lemma}

The symbol-pair code achieving the Singleton bound is called a {\it maximum distance separable} (\,MDS\,) symbol-pair code.

\subsection{Constacyclic codes}

In this subsection, we introduce some notation of constacyclic codes.
For a fixed nonzero element $\eta$ of $\mathbb{F}_{q}$, the $\eta$-constacyclic shift $\tau_{\eta}$ on $\mathbb{F}_{q}^{n}$ is
\begin{equation*}
  \tau_{\eta}\left(x_{0},\,x_{1},\cdots,x_{n-1}\right)
  =\left(\eta\,x_{n-1},\,x_{0},\cdots,x_{n-2}\right).
\end{equation*}
A linear code $\mathcal{C}$ is called an {\it$\eta$-constacyclic code} if $\tau_{\eta}\left(\mathbf{c}\right)\in\mathcal{C}$ for any codeword $\mathbf{c}\in\mathcal{C}$.
An $\eta$-constacyclic code is a {\it cyclic code} if $\eta=1$ and a {\it negacyclic code} if $\eta=-1$.
It should be noted that each codeword $\mathbf{c}=\left(c_{0},\,c_{1},\cdots,c_{n-1}\right)\in\mathcal{C}$ is identical to its polynomial representation
\begin{equation*}
  c(x)=c_{0}+c_{1}\,x+\cdots+c_{n-1}\,x^{n-1}.
\end{equation*}
For convenience, we always regard the codeword $\mathbf{c}$ in $\mathcal{C}$ as the corresponding polynomial $c(x)$ in this paper.
Notice that a linear code $\mathcal{C}$ is an $\eta$-constacyclic code if and only if it is an ideal of the principle ideal ring $\mathbb{F}_{q}[x]/\langle x^{n}-\eta\rangle$.
As a consequence, there exists a unique monic divisor $g(x)\in\mathbb{F}_{q}[x]$ of $x^{n}-\eta$ such that
\begin{equation*}
  \mathcal{C}=\left\langle g(x)\right\rangle=\left\{f(x)\,g(x)\left({\rm mod}\left(x^n-\eta\right)\right)\big|\,f(x)\in\mathbb{F}_{q}\left[x\right]\right\}.
\end{equation*}
The polynomial $g(x)$ is called the {\it generator polynomial} of $\mathcal{C}$ and the dimension of $\mathcal{C}$ is $n-k$, where $k$ is the degree of $g(x)$.

Recall that a $q$-ary $\eta$-constacyclic code of length $n$ is a {\it simple-root} constacyclic code if ${\rm gcd}\left(n,\,q\right)=1$ and a {\it repeated-root} constacyclic code if $p\,|\,n$, where $p$ is the characteristic of $\mathbb{F}_{q}$.
Simple-root constacyclic codes can be characterized by their defining sets \cite{HP,MS}.
Compared to simple-root cyclic codes, repeated-root cyclic codes are no longer characterized by its set of zeros.
Let $\mathcal{C}=\left\langle g(x)\right\rangle$ be a repeated-root cyclic code of length $lp^{e}$ over $\mathbb{F}_{q}$, where $l$ and $e$ are positive integers with ${\rm gcd}\left(l,\,p\right)=1$.
It is shown in \cite{CMSS} that the minimum Hamming distance of $\mathcal{C}$ can be derived from $d_{H}(\overline{\mathcal{C}}_{t})$.
Here $\overline{\mathcal{C}}_{t}$ is a simple-root cyclic code fully determined by $\mathcal{C}$ as follows.

More precisely, assume that
\begin{equation*}
  g(x)=\prod_{i=1}^{r}m_{i}(x)^{e_{i}}
\end{equation*}
where each $m_{i}(x)$ is a monic irreducible polynomial over $\mathbb{F}_{q}$ and $e_{i}$ are positive integers.
For a fixed $t$ with $0\leq t\leq p^{e}-1$, $\overline{\mathcal{C}}_{t}$ is defined to be a simple-root cyclic code of length $l$ over $\mathbb{F}_{q}$ with the generator polynomial
\begin{equation*}
  \overline{g}_{t}(x)=\prod_{1\leq i\leq r,\,e_{i}>t}m_{i}(x).
\end{equation*}
If $\overline{g}_{t}(x)=x^{l}-1$, then $\overline{\mathcal{C}}_{t}$ contains only the all-zero codeword and we set $d_{H}(\overline{\mathcal{C}}_{t})=\infty$.
If each $e_{i}\leq t$, then  $\overline{g}_{t}(x)=1$ and $d_{H}(\overline{\mathcal{C}}_{t})=1$.

The following lemma reveals that the minimum Hamming distance of repeated-root cyclic codes can be determined by the polynomial algebra, which will be applied to derive the minimum Hamming distance of codes in this paper.

\begin{lemma}{\rm(\cite{CMSS})}\label{lemdistance}
Let $\mathcal{C}$ be a repeated-root cyclic code of length $lp^{e}$ over $\mathbb{F}_{q}$, where $l$ and $e$ are positive integers with ${\rm gcd}\left(l,\,p\right)=1$.
Then
\begin{equation}\label{eqdistance}
  d_{H}(\mathcal{C})={\rm min}\left\{P_{t}\cdot d_{H}\left(\overline{\mathcal{C}}_{t}\right)\,\big|\,0\leq t\leq p^{e}-1\right\}
\end{equation}
where
\begin{equation}\label{eqpt}
  P_{t}=w_{H}\left(\left(x-1\right)^{t}\right)=\prod_{i}\left(t_{i}+1\right)
\end{equation}
with $t_{i}$'s being the coefficients of the radix-$p$ expansion of $t$.
\end{lemma}

In this paper, we will employ repeated-root cyclic codes to construct new MDS symbol-pair codes.  The following lemmas are very useful.

\begin{lemma}{\rm(\cite{CLL})}\label{lemMDS}
Let $\mathcal{C}$ be an $[n,\,k,\,d_{H}(\mathcal{C})]$ constacyclic code over $\mathbb{F}_{q}$ with $2\leq d_{H}(\mathcal{C})<n$.
Then we have $d_{p}(\mathcal{C})\geq d_{H}(\mathcal{C})+2$ if and only if $\mathcal{C}$ is not an MDS code, i.e., $k<n-d_{H}(\mathcal{C})+1$.
\end{lemma}

\begin{lemma}\label{lemdH}
Let $\mathcal{C}=\left\langle g(x)\right\rangle$ be a repeated-root cyclic code of length $lp^{e}$ over $\mathbb{F}_{q}$ and $c(x)=\left(x^{l}-1\right)^{t}v(x)$ a codeword in $\mathcal{C}$ with Hamming weight $d_{H}(\mathcal{C})$, where $l$ and $e$ are positive integers with ${\rm gcd}\left(l,\,p\right)=1$, $0\leq t\leq p^{e}-1$ and $\left(x^{l}-1\right)\nmid v(x)$.
Then
\begin{equation*}
  w_{H}\left(c(x)\right)=P_{t}\cdot N_{v}
\end{equation*}
where $P_{t}$ is defined as $\left(\ref{eqpt}\right)$ in Lemma \ref{lemdistance} and $N_{v}=w_{H}\left(v(x)\,{\rm mod}\left(x^{l}-1\right)\right)$.
\end{lemma}

{\it Proof}\,
Denote $\overline{v}\left(x\right)=\left(v(x)\,{\rm mod}\left(x^{l}-1\right)\right)$ and
\begin{equation*}
  \overline{c}_{t}\left(x\right)=\left(\left(x^{l}-1\right)^{t}\cdot\overline{v}\left(x\right)^{p^e} {\rm mod}\left(x^{lp^{e}}-1\right)\right).
\end{equation*}
Assume that
\begin{equation*}
  g(x)=\prod_{i=1}^{r}m_{i}(x)^{e_{i}}
\end{equation*}
and
\begin{equation*}
  \overline{g}_{t}(x)=\prod_{1\leq i\leq r,\,e_{i}>t}m_{i}(x).
\end{equation*}
It follows from $x^{lp^e}-1=(x^l-1)^{p^e}$, $\left(x^{l}-1\right)\nmid v(x)$ and $g(x)\,|\,c(x)$ that $\overline{g}_{t}(x)\,|\,\overline{v}\left(x\right)$.
Combining with $t<p^e$, one can obtain that for any $1\leq i\leq r$,

$i$) if $e_{i}>t$, then $m_{i}(x)\,|\,\overline{v}\left(x\right)$ and $m_{i}(x)$ is a factor of $\overline{c}_{t}\left(x\right)$ with multiplicity at least $p^e$;

$ii$) if $e_{i}\leq t$, then $m_{i}(x)$ is a factor of $\overline{c}_{t}\left(x\right)$ with multiplicity at least $t$.
\\
Hence $g(x)\,|\,\overline{c}_{t}(x)$.

Meanwhile, due to ${\rm deg}(\overline{v}\left(x\right))<l$, there must exist a root of $x^{l}-1$ whose multiplicity in $\overline{c}_{t}\left(x\right)$ is exactly $t$.
This leads to $(x^{lp^e}-1)\,\nmid\,\overline{c}_{t}\left(x\right)$ and then $\overline{c}_{t}(x)$ is a nonzero codeword in $\mathcal{C}$.
It can be verified that
\begin{equation*}
\begin{split}
  w_{H}\left(\overline{c}_{t}(x)\right)&=w_{H}\left(\left(x^{l}-1\right)^t\cdot \overline{v}\left(x\right)^{p^{e}}{\rm mod}\left(x^{lp^{e}}-1\right)\right)\\
  &\leq w_{H}\left(\left(x^{l}-1\right)^t\cdot \overline{v}\left(x\right)^{p^{e}}\right)
  \leq w_{H}\left(\left(x^{l}-1\right)^{t}\right)\cdot w_{H}\left(\overline{v}\left(x\right)^{p^{e}}\right)
  =P_{t}\cdot N_{v}.
\end{split}
\end{equation*}
On the other hand, according to Theorem $6.3$ in \cite{MCJ}, we have
\begin{equation*}
  w_{H}\left(c(x)\right)\geq w_{H}\left((x^{l}-1)^{t}\right)\cdot w_{H}\left(v(x)\,{\rm mod}\,(x^{l}-1)\right)=P_{t}\cdot N_{v}\geq w_{H}\left(\overline{c}_{t}(x)\right).
\end{equation*}
Since $w_{H}(c(x))=d_{H}(\mathcal{C})$, one can immediately conclude that
\begin{equation*}
  w_{H}\left(c(x)\right)=w_{H}\left(\overline{c}_{t}(x)\right)=P_{t}\cdot N_{v}.
\end{equation*}
This completes the proof.
\hfill $\Box$

The following lemma will be frequently used to prove our results.

\begin{lemma}\label{lemsolution}
Let $p$ be a prime power with $5\,|\,(p-1)$, $\beta$ be a primitive $5$-th root of unity in $\mathbb{F}_{p}$ and $a_i\in \mathbb{F}_{p}^*$ for $1\leq i\leq3$.
Then
\begin{equation}\label{eqbelta}
  \beta^2+3\beta+1\ne0
\end{equation}
and for $(i,\,j)=(2,\,3)$, $(2,\,4)$ or $(3,\,4)$, the solution of the $\mathbb{F}_{p}$-linear system of equations
\begin{equation}\label{eqsolution}
  \left\{
  \begin{array}{l}
    1+a_{1}+a_{2}+a_{3}=0,\\[2mm]
    1+a_{1}\,\beta+a_{2}\,\beta^i+a_{3}\,\beta^j=0,\\[2mm]
    1+a_{1}\,\beta^2+a_{2}\,\beta^{2i}+a_{3}\,\beta^{2j}=0\\
  \end{array}
  \right.
\end{equation}
is given as
\begin{table}[!htb]
  \renewcommand\arraystretch{1.1}
  \setlength{\tabcolsep}{25.2mm}{
  \begin{tabular}{@{}ll@{}}
  \hline Value of ($i$,\,$j$)      &Corresponding solution $\left(a_{1},\,a_{2},\,a_{3}\right)$       \\
  \hline $\left(2,\,3\right)$       &$\left(-\frac{\beta^2+\beta+1}{\beta^2},\,\frac{\beta^2+\beta+1}{\beta^3},\,-\frac{1}{\beta^3}\right)$ \\
  \hline $\left(2,\,4\right)$       &$\left(-\frac{1}{\beta},\,-\frac{\beta}{\beta+1},\,\frac{1}{\beta\left(\beta+1\right)}\right)$ \\
  \hline $\left(3,\,4\right)$       &$\left(\frac{\beta^2}{\beta+1},\,-\frac{1}{\beta+1},\,-\beta\right)$. \\
  \hline
  \end{tabular}}
\end{table}
\end{lemma}

{\it Proof}\,
Assume that $\beta^2+3\beta+1=0$.
The fact $\beta$ is a primitive $5$-th root of unity indicates
\begin{equation*}
  0=\beta^4+\beta^3+\beta^2+\beta+1=-5\left(3\beta+1\right)
\end{equation*}
which yields $\beta^2=-3\beta-1=0$, a contradiction.

If $(i,\,j)=(2,\,3)$, then (\ref{eqsolution}) is transformed into
\begin{equation*}
  \left\{
  \begin{array}{l}
    1+a_{1}+a_{2}+a_{3}=0,\\[2mm]
    1+a_{1}\,\beta+a_{2}\,\beta^2+a_{3}\,\beta^3=0,\\[2mm]
    1+a_{1}\,\beta^2+a_{2}\,\beta^4+a_{3}\,\beta=0.\\
  \end{array}
  \right.
\end{equation*}
This leads to
\begin{equation*}\label{eq5p123}
  a_{1}=-\frac{\beta^2+\beta+1}{\beta^2},\quad a_{2}=\frac{\beta^2+\beta+1}{\beta^3},\quad a_{3}=-\frac{1}{\beta^3}.
\end{equation*}
Similarly, we can derive the solutions of (\ref{eqsolution}) for $(i,\,j)=(2,\,4)$ and (3,\,4).
This completes the proof.
\hfill $\Box$

\section{Constructions of MDS Symbol-Pair Codes}

In this section, we propose three new classes of MDS symbol-pair codes from repeated-root cyclic codes by analyzing the solutions of certain equations over $\mathbb{F}_{p}$.   Firstly, for length $5p$, two classes of MDS symbol-pair codes with minimum symbol-pair distance $7$ or $8$ are constructed respectively.  In addition, for $n=4p$, we derive a class of MDS symbol-pair codes with $d_{p}=7$.

From now on, we denote $c^{(k)}(x)$ by the $k$-th formal derivative of $c(x)$, where $k$ is a positive integer and $c(x)\in \mathbb{F}_{p}[x]$.  Let $\star$ denote an element in $\mathbb{F}_p^{*}$ and $\mathbf{0}$ is the zero vector.  Due to the linearity and the cyclic shift property of cyclic codes, we assume that the constant term of $c(x)$ occurred in this paper is always $1$.

\subsection{MDS symbol-pair codes for $n=5p$}

In this subsection, two classes of MDS symbol-pair codes with length $5p$ are constructed.

Now we present a class of MDS symbol-pair codes with minimum symbol-pair distance $7$ for any prime $p$ with $5\,|\left(p-1\right)$ and $p\ne 41$.

\begin{thm}\label{thmMDS5p7p}
Let $p$ be a prime with $5\,|\left(p-1\right)$ and $p\ne 41$.
Then there exists an MDS $\left(5p,\,7\right)_{p}$ symbol-pair code.
\end{thm}

{\it Proof}\,
Let $\mathcal{C}$ be a repeated-root cyclic code of length $5p$ over $\mathbb{F}_{p}$ with the generator polynomial
\begin{equation*}
  g(x)=\left(x-1\right)^{3}\left(x-\beta\right)\left(x-\beta^{2}\right)
\end{equation*}
where $\beta$ is a primitive $5$-th root of unity in $\mathbb{F}_{p}$.

Note that $\mathcal{C}$ is a $\left[5p,\,5p-5,\,4\right]$ cyclic code due to Lemma \ref{lemdistance}.
Indeed, recall that $\overline{g}_{t}(x)$ is the generator polynomial of $\overline{\mathcal{C}}_{t}$.
If $t=0$, then
\begin{equation*}
  \overline{g}_{0}(x)=\left(x-1\right)\left(x-\beta\right)\left(x-\beta^2\right)
\end{equation*}
and
\begin{equation*}
  P_{0}\cdot d_{H}\left(\overline{\mathcal{C}}_{0}\right)=1\cdot 4=4.
\end{equation*}
If $t=1$, then $\overline{g}_{1}(x)=x-1$ and
\begin{equation*}
  P_{1}\cdot d_{H}\left(\overline{\mathcal{C}}_{1}\right)=2\cdot 2=4.
\end{equation*}
If $t=2$, then $\overline{g}_{2}(x)=x-1$ and
\begin{equation*}
  P_{2}\cdot d_{H}\left(\overline{\mathcal{C}}_{2}\right)=3\cdot 2=6.
\end{equation*}
If $3\leq t\leq p-1$, then $\overline{g}_{t}(x)=1$ and
\begin{equation*}
  P_{t}\cdot d_{H}\left(\overline{\mathcal{C}}_{t}\right)=(t+1)\cdot 1=t+1\geq 4.
\end{equation*}
With the aid of the equality $\left(\ref{eqdistance}\right)$ in Lemma \ref{lemdistance}, one can immediately get $d_{H}(\mathcal{C})=4$.

Since $\mathcal{C}$ is not MDS, by Lemma \ref{lemMDS}, one can obtain that $d_{p}(\mathcal{C})\geq 6$.
Now we claim that there does not exist a codeword in $\mathcal{C}$ with $(w_H(c(x)), \,w_p(c(x)))=(5,\,6)$.
On the contrary, without loss of generality, we assume
\begin{equation*}
  c(x)=c_{0}+c_{1}\,x+c_{2}\,x^{2}+c_{3}\,x^{3}+c_{4}\,x^{4}
\end{equation*}
where $c_{i}\in \mathbb{F}_{p}^{*}$ for any $0\leq i\leq 4$.
This is contradictory with
\begin{equation*}
  {\rm deg}(g(x))=5,\quad {\rm deg}(c(x))\geq {\rm deg}(g(x)).
\end{equation*}
Thus, there does not exist a codeword in $\mathcal{C}$ with $(w_H(c(x)), \,w_p(c(x)))=(5,\,6)$.
To show that $\mathcal{C}$ is an MDS $\left(5p,\,7\right)_{p}$ symbol-pair code, it is sufficient to verify that there does not exist a codeword in $\mathcal{C}$ with $(w_H(c(x)), \,w_p(c(x)))=(4,\,6)$.

Let $c(x)$ be a codeword in $\mathcal{C}$ with Hamming weight $4$.
Suppose that $c(x)$ has the factorization $c(x)=\left(x^{5}-1\right)^{t}v(x)$,
where $0\leq t\leq p-1$, $\left(x^{5}-1\right)\nmid v(x)$ and
\begin{equation*}
  v(x)=v_{0}(x^{5})+x\,v_{1}(x^{5})+x^{2}\,v_{2}(x^{5})+x^{3}\,v_{3}(x^{5})
  +x^{4}\,v_{4}(x^{5}).
\end{equation*}
It follows from Lemma \ref{lemdH} that
\begin{equation*}
  4=w_{H}\left(\left(x^{5}-1\right)^{t}\right)\cdot w_{H}\left(v(x)\,{\rm mod}\left(x^{5}-1\right)\right)=\left(1+t\right)N_{v}
\end{equation*}
where $N_{v}=w_{H}\left(v(x)\,{\rm mod}\left(x^{5}-1\right)\right)$.
Then one can deduce that $\left(N_{v},\,t\right)=\left(1,\,3\right)$, $\left(2,\,1\right)$ or $\left(4,\,0\right)$.

  If $\left(N_{v},\,t\right)=\left(1,\,3\right)$, then it is obvious that the symbol-pair weight of $c(x)$ is greater than $6$.

  If $\left(N_{v},\,t\right)=\left(2,\,1\right)$ and $c(x)$ has symbol-pair weight $6$, then Lemma \ref{lemdpdH} indicates that its certain cyclic shift must have the form
  \begin{equation*}
    \left(\star,\,\star,\,\mathbf{0},\,\star,\,\star,\,\mathbf{0}\right).
  \end{equation*}
  Let
  \begin{equation*}
    c(x)=1+a_{1}\,x+a_{2}\,x^{5i}+a_{3}\,x^{5i+1}
  \end{equation*}
  for some positive integer $i$ with $1\leq i\leq p-1$ and $a_{1},\,a_{2},\,a_{3}\in \mathbb{F}_{p}^{*}$.
  It follows from $5\,|\left(p-1\right)$ and ${\rm gcd}\left(i,p\right)=1$ that $p\nmid 5i$.
  The fact $c\left(1\right)=c\left(\beta\right)=0$ induces that
  \begin{equation*}
    \left\{
    \begin{array}{l}
    1+a_{1}+a_{2}+a_{3}=0,\\[2mm]
    1+a_{1}\,\beta+a_{2}+a_{3}\,\beta=0\\
    \end{array}
    \right.
  \end{equation*}
  which implies $a_{1}=-a_{3}$ and $a_{2}=-1$.
  Then $c^{\left(1\right)}\left(1\right)=c^{\left(2\right)}\left(1\right)=0$ yields
  \begin{equation*}
    \left\{
    \begin{array}{l}
    a_{1}-5i-\left(5i+1\right)a_{1}=0,\\[2mm]
    -5i\left(5i-1\right)-5i\left(5i+1\right)a_{1}=0.\\
    \end{array}
    \right.
  \end{equation*}
  This indicates $a_1=-1$ and then $2=0$, a contradiction.

  If $\left(N_{v},\,t\right)=\left(4,\,0\right)$ and $c(x)$ has symbol-pair weight $6$, then its corresponding cyclic shift must have the form
  \begin{equation*}
    \left(\star,\,\star,\,\mathbf{0},\,\star,\,\star,\,\mathbf{0}\right)
  \end{equation*}
  or
  \begin{equation*}
    \left(\star,\,\star,\,\star,\,\mathbf{0},\,\star,\,\mathbf{0}\right).
  \end{equation*}
  In what follows, we discuss the above two cases one by one.

{\bf Case I}\,\,  For the case $\left(\star,\,\star,\,\mathbf{0},\,\star,\,\star,\,\mathbf{0}\right)$, there are two subcases to be considered:
\begin{itemize}
  \item
  For the subcase $c(x)=1+a_1\,x+a_2\,x^{5i+2}+a_3\,x^{5i+3}$ with $1\leq i\leq p-1$ and $a_1,\,a_2,\,a_3\in \mathbb{F}_p^*$, it follows from $c\left(1\right)=c^{\left(1\right)}\left(1\right)=c^{\left(2\right)}\left(1\right)=0$ that
  \begin{equation}\label{eqMDS5p7pNv4220101}
    \left\{
    \begin{array}{l}
     1+a_{1}+a_{2}+a_{3}=0,\\[2mm]
     a_{1}+\left(5i+2\right)a_{2}+\left(5i+3\right)a_{3}=0,\\[2mm]
     \left(5i+2\right)\left(5i+1\right)a_{2}
     +\left(5i+3\right)\left(5i+2\right)a_{3}=0.\\
    \end{array}
    \right.
  \end{equation}

  If $p\,|\left(5i+2\right)$, then $\left(\ref{eqMDS5p7pNv4220101}\right)$ implies that $a_1=-a_3$ and $a_2=-1$.
  Then $c\left(\beta\right)=c\left(\beta^2\right)=0$ yields
  \begin{equation*}
    \left\{
    \begin{array}{l}
      1+a_{1}\,\beta-\beta^2-a_{1}\,\beta^3=0,\\[2mm]
      1+a_{1}\,\beta^2-\beta^4-a_{1}\,\beta^6=0.\\
    \end{array}
    \right.
  \end{equation*}
  One can immediately obtain that
  \begin{equation*}
    a_1=\frac{\beta^2-1}{\beta-\beta^3}=\frac{\beta^4-1}{\beta^2-\beta^6}.
  \end{equation*}
  This leads to $\beta=1$, a contradiction.

  If $p\nmid\left(5i+2\right)$, then $\left(\ref{eqMDS5p7pNv4220101}\right)$ yields that $a_1=-a_2$ and $a_3=-1$.
  It follows from $c\left(\beta\right)=c\left(\beta^2\right)=0$ that
  \begin{equation*}
  \left\{
  \begin{array}{l}
    1+a_{1}\,\beta-a_{1}\,\beta^2-\beta^3=0,\\[2mm]
    1+a_{1}\,\beta^2-a_{1}\,\beta^4-\beta^6=0.\\
  \end{array}
  \right.
  \end{equation*}
  Then one gets that
  \begin{equation*}
    a_1=\frac{\beta^3-1}{\beta-\beta^2}=\frac{\beta^6-1}{\beta^2-\beta^4}
  \end{equation*}
  which induces
  \begin{equation*}
    \beta^3+1=\beta\left(\beta+1\right).
  \end{equation*}
  This implies $\left(\beta-1\right)\left(\beta^2-1\right)=0$, a contradiction.

  \item
  Consider the subcase $c(x)=1+a_1\,x+a_2\,x^{5i+3}+a_3\,x^{5i+4}$ with $0\leq i\leq p-2$ and $a_1,\,a_2,\,a_3\in \mathbb{F}_p^*$.
  By arguments similar to the previous subcase of $c(x)=1+a_1\,x+a_2\,x^{5i+2}+a_3\,x^{5i+3}$, one can also derive a contradiction and we omit the proof here.
\end{itemize}

  {\bf Case II}\,\,  For the remaining case  $\left(\star,\,\star,\,\star,\,\mathbf{0},\,\star,\,\mathbf{0}\right)$, there are also two subcases to be discussed:

\begin{itemize}
  \item
  Consider the subcase $c(x)=1+a_{1}\,x+a_{2}\,x^{2}+a_{3}\,x^{5i+3}$ with $1\leq i\leq p-1$ and $a_{1},\,a_{2},\,a_{3}\in \mathbb{F}_p^{*}$.
  Notice that $c\left(1\right)=c\left(\beta\right)=c\left(\beta^2\right)=0$ and Lemma \ref{lemsolution} indicates
  \begin{equation}\label{eqMDS5p7pNv4310101}
    a_{1}=-\frac{\beta^2+\beta+1}{\beta^2},\quad a_{2}=\frac{\beta^2+\beta+1}{\beta^3},\quad a_{3}=-\frac{1}{\beta^3}.
  \end{equation}
  It follows from $c^{\left(1\right)}\left(1\right)=c^{\left(2\right)}\left(1\right)=0$ that
  \begin{equation}\label{eqMDS5p7pNv4310102}
  \left\{
  \begin{array}{l}
    a_{1}+2\,a_{2}+\left(5i+3\right)a_{3}=0,\\[2mm]
    2\,a_{2}+\left(5i+3\right)\left(5i+2\right)a_{3}=0.\\
  \end{array}
  \right.
  \end{equation}
  Observe that (\ref{eqMDS5p7pNv4310102}) yields
  \begin{equation}\label{eqMDS5p7pNv4310103}
  \left\{
  \begin{array}{l}
    a_{1}=\left(5i+3\right)\left(5i+1\right)a_{3},\\[2mm]
    \left(5i+2\right)a_{1}+2\left(5i+1\right)a_{2}=0\\
  \end{array}
  \right.
  \end{equation}
  and the second equality in (\ref{eqMDS5p7pNv4310102}) indicates $p\nmid (5i+2)$.
  Let $t=5i+2$.
  By $\left(\ref{eqMDS5p7pNv4310101}\right)$ and (\ref{eqMDS5p7pNv4310103}), one can immediately have
  \begin{equation}\label{eqMDS5p7pNv4310104}
  \left\{
  \begin{array}{l}
    t^2=\beta^3+\beta^2+\beta+1,\\[2mm]
    t(\beta-2)=2.\\
  \end{array}
  \right.
  \end{equation}
  The second equality in (\ref{eqMDS5p7pNv4310104}) indicates $\beta\ne2$ and $t=-\frac{2}{\beta-2}$.
  By substituting the value of $t$ into the first equality in (\ref{eqMDS5p7pNv4310104}), one can obtain
  \begin{equation*}
    \frac{4\beta}{(\beta-2)^2}=(\beta^3+\beta^2+\beta+1)\beta.
  \end{equation*}
  It follows from $\beta^4+\beta^3+\beta^2+\beta+1=0$ that $\beta^2=-4$ and
  \begin{equation*}
    \beta^4+\beta^3+\beta^2+\beta+1=13-3\beta=0.
  \end{equation*}
  This leads to $\beta=\frac{13}{3}$ and then
  \begin{equation*}
    \beta^2=\frac{169}{9}=-4
  \end{equation*}
  implies $5\cdot 41=0$, which is contradictory with $5\,|\,(p-1)$ and $p\ne 41$.

  \item
  For the subcase $c(x)=1+a_{1}\,x+a_{2}\,x^{2}+a_{3}\,x^{5i+4}$ with $0\leq i\leq p-2$ and $a_{1},\,a_{2},\,a_{3}\in \mathbb{F}_p^{*}$, it follows from $c\left(1\right)=c\left(\beta\right)=c\left(\beta^2\right)=0$ and Lemma \ref{lemsolution} that
  \begin{equation}\label{eqMDS5p7pNv4310201}
    a_{1}=-\frac{1}{\beta},\quad a_{2}=-\frac{\beta}{\beta+1},\quad a_{3}=\frac{1}{\beta\left(\beta+1\right)}.
  \end{equation}
  On the other hand, $c^{\left(1\right)}\left(1\right)=c^{\left(2\right)}\left(1\right)=0$ yields that
  \begin{equation*}
  \left\{
  \begin{array}{l}
    a_{1}+2\,a_{2}+\left(5i+4\right)a_{3}=0,\\[2mm]
    2\,a_{2}+\left(5i+4\right)\left(5i+3\right)a_{3}=0\\
  \end{array}
  \right.
  \end{equation*}
  which is equivalent to
  \begin{equation*}
  \left\{
  \begin{array}{l}
    a_{1}+2\,a_{2}+\left(5i+4\right)a_{3}=0,\\[2mm]
    a_{1}=\left(5i+4\right)\left(5i+2\right)a_{3}.\\
  \end{array}
  \right.
  \end{equation*}
  Let $t=5i+3$.
  Together with $\left(\ref{eqMDS5p7pNv4310201}\right)$, one can immediately obtain that
  \begin{equation*}
  \left\{
  \begin{array}{l}
    t=2\,\beta^2+\beta,\\[2mm]
    t^2+\beta=0.\\
  \end{array}
  \right.
  \end{equation*}
  Then by substituting the value of $t$, one has
  \begin{equation}\label{eqMDS5p7pNv4310202}
    \beta^2=3\,\beta+3
  \end{equation}
  and
  \begin{equation*}
    0=\beta^4+\beta^3+\beta^2+\beta+1=61\,\beta+49.
  \end{equation*}
  If $p=61$, then $49=0$, which is impossible.
  If $p\ne 61$, then $\beta=-\frac{49}{61}$.
  It follows from (\ref{eqMDS5p7pNv4310202}) that
  \begin{equation*}
    \left(\frac{49}{61}\right)^2=3\left(-\frac{49}{61}+1\right).
  \end{equation*}
  This implies $5\cdot 41=0$, a contradiction similar as the previous subcase.
\end{itemize}

Therefore, $\mathcal{C}$ is an MDS $\left(5p,\,7\right)_{p}$ symbol-pair code.
This completes the proof.
\hfill $\Box$

Another class of MDS symbol-pair codes with $n=5p$ and $d_{p}=8$ is proposed as follows.

\begin{thm}\label{thmMDS5p8p}
Let $p$ be a prime with $5\,|\left(p-1\right)$.
Then there exists an MDS $\left(5p,\,8\right)_{p}$ symbol-pair code.
\end{thm}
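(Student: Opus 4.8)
The plan is to follow the template of Theorems \ref{thmMDS4p7p} and \ref{thmMDS5p7p}, raising the multiplicity of the root $1$ by one. I would take $\mathcal{C}=\langle g(x)\rangle$ to be the repeated-root cyclic code of length $5p$ over $\mathbb{F}_{p}$ with
\[
  g(x)=(x-1)^{4}(x-\beta)(x-\beta^{2}),
\]
where $\beta$ is a primitive $5$-th root of unity in $\mathbb{F}_{p}$ (this exists because $5\mid(p-1)$, which in particular forces $p\ge 11$), so that $k=5p-6$. The first step is to show $d_{H}(\mathcal{C})=4$ by Lemma \ref{lemdistance}: for $t=0$ the polynomial $\overline{g}_{0}(x)=(x-1)(x-\beta)(x-\beta^{2})$ generates a $[5,2]$ cyclic code with the three consecutive roots $\beta^{0},\beta^{1},\beta^{2}$, so the BCH and Singleton bounds give $d_{H}(\overline{\mathcal{C}}_{0})=4$ and $P_{0}\cdot d_{H}(\overline{\mathcal{C}}_{0})=4$; for $t=1,2,3$ one has $\overline{g}_{t}(x)=x-1$ and $P_{t}\cdot d_{H}(\overline{\mathcal{C}}_{t})=2P_{t}\in\{4,6,8\}$; for $4\le t\le p-1$ one has $\overline{g}_{t}(x)=1$ and $P_{t}\cdot d_{H}(\overline{\mathcal{C}}_{t})=t+1\ge 5$. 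Hence $d_{H}(\mathcal{C})=4$, and as $k=5p-6<5p-3$ the code is not MDS, so Lemma \ref{lemMDS} gives $d_{p}(\mathcal{C})\ge 6$; since also $d_{p}(\mathcal{C})\le 2d_{H}(\mathcal{C})=8$, it remains only to prove $d_{p}(\mathcal{C})\ge 8$, i.e.\ that $\mathcal{C}$ has no codeword of symbol-pair weight $6$ or $7$.

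A codeword $c$ with $w_{p}(c)\le 7$ has $w_{H}(c)\in\{4,5,6\}$, and if its support forms $r$ cyclic runs then $w_{p}(c)=w_{H}(c)+r$. A single run of length $w$ is, after a cyclic shift, a polynomial of degree $w-1$, which must be at least $\deg g=6$; this kills the one-run subcases of Hamming weights $4$, $5$, $6$ (symbol-pair weights $5$, $6$, $7$), hence the whole Hamming-weight-$6$ possibility. Writing $c(x)=(x^{5}-1)^{t}v(x)$ with $(x^{5}-1)\nmid v(x)$, Lemma \ref{lemdH} gives $w_{H}(c)=P_{t}N_{v}$ with $1\le N_{v}\le 5$, so Hamming weight $4$ forces $(t,N_{v})\in\{(0,4),(1,2),(3,1)\}$ and Hamming weight $5$ forces $(t,N_{v})\in\{(0,5),(4,1)\}$; the cases $N_{v}=1$ give $c(x)=(x^{5}-1)^{t}x^{j}$, whose support is an arithmetic progression of step $5$ and thus has symbol-pair weight $2t+2\ge 8$, so they are discarded. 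What remains is the finite list: Hamming weight $4$ with run pattern $(3,1)$ or $(2,2)$ (symbol-pair weight $6$) or $(2,1,1)$ (symbol-pair weight $7$), and Hamming weight $5$ with run pattern $(4,1)$ or $(3,2)$ (symbol-pair weight $7$).

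For each surviving configuration I would, after a cyclic shift normalizing one nonzero coordinate to $1$ and placing a run at the front, write $c(x)$ as a $4$- or $5$-term polynomial with coefficients $a_{1},a_{2},\dots\in\mathbb{F}_{p}^{*}$ depending on one or two exponent parameters, and impose the six linear conditions $c(1)=c^{(1)}(1)=c^{(2)}(1)=c^{(3)}(1)=c(\beta)=c(\beta^{2})=0$ coming from $g\mid c$. For a $4$-term codeword, the four conditions at $x=1$ form a linear system whose determinant equals, up to sign, the Vandermonde product $\prod_{i<j}(e_{i}-e_{j})$ over $\mathbb{F}_{p}$ in the exponents $e_{i}$; so a nonzero codeword forces two exponents to be congruent modulo $p$, and the prescribed run structure narrows this to a few subcases, mirroring the dichotomy $p\mid l$ versus $p\nmid l$ in the proof of Theorem \ref{thmMDS4p7p}. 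In each subcase I would solve the residual $3\times 3$ system for $a_{1},a_{2},a_{3}$ from $c(1)=c^{(1)}(1)=c^{(2)}(1)=0$ and substitute into the leftover equations $c^{(3)}(1)=0$, $c(\beta)=0$, $c(\beta^{2})=0$ to reach a contradiction --- a numerical absurdity such as $2=0$ or $4l=0$, or a polynomial relation in $\beta$ incompatible with $\beta^{4}+\beta^{3}+\beta^{2}+\beta+1=0$. For a $5$-term codeword the four conditions at $x=1$ leave a one-parameter family, eliminated analogously via $c(\beta)=c(\beta^{2})=0$.

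The principal difficulty is the bookkeeping of this last step, not any single hard computation: there are more configurations than in the $d_{p}=7$ theorems (three-run patterns and Hamming-weight-$5$ two-run patterns now appear), each branching further on the residues of the exponent parameters modulo $p$ and modulo $5$, and in every branch one must both solve a small linear system over $\mathbb{F}_{p}$ and verify that the resulting identity genuinely contradicts $\beta$ being a primitive $5$-th root of unity. To keep the case analysis complete, the cleanest organization is to determine first, from $(x-1)^{4}\mid c(x)$, exactly which pairs of exponents are allowed to coincide modulo $p$, and only afterwards bring in the evaluations at $\beta$ and $\beta^{2}$; this also makes clear why no exponent pattern compatible with the required run structure escapes the contradiction.
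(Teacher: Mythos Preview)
Your generator differs from the paper's: you take $g(x)=(x-1)^{4}(x-\beta)(x-\beta^{2})$, whereas the paper takes $g(x)=(x-1)^{3}(x-\beta)(x-\beta^{2})^{2}$. Both are degree~$6$ and yield $[5p,\,5p-6,\,4]$ cyclic codes, and your preliminary reductions (the computation $d_{H}=4$ via Lemma~\ref{lemdistance}, the exclusion of single-run codewords of Hamming weight $\le 6$ by the degree bound, and the use of Lemma~\ref{lemdH} to restrict $(t,N_{v})$) are sound for either choice. One small imprecision: when $N_{v}=1$ it is not literally true that $c(x)=(x^{5}-1)^{t}x^{j}$; what is true, and what suffices, is that the support of $c$ lies in a single residue class modulo~$5$, so no two nonzero positions are adjacent and $w_{p}(c)=2\,w_{H}(c)\ge 8$.

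The substantive divergence is in the terminal case analysis, which you leave undone. The paper's double root at $\beta^{2}$ is not incidental: the condition $c'(\beta^{2})=0$ carries coefficients that are powers of $\beta$, and in every five-term subcase the paper first solves $c(1)=c(\beta)=c(\beta^{2})=0$ for $a_{1},a_{2},a_{3}$ in terms of $a_{4}$, then forms a linear combination of $c'(1)=0$ and $c'(\beta^{2})=0$ that annihilates the term containing the unknown exponent parameter~$i$ and the free coefficient~$a_{4}$ simultaneously. What remains is a relation in $\beta$ alone (for instance $\beta^{3}+2\beta^{2}-2\beta-1=0$ or $(\beta-1)^{2}=0$), which is checked to be coprime to $\Phi_{5}(\beta)=\beta^{4}+\beta^{3}+\beta^{2}+\beta+1$, uniformly in~$p$. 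Your extra condition $c'''(1)=0$ has only rational-integer coefficients; once you use the four conditions at $x=1$ to express all the $a_{j}$ as rational functions of the exponent~$l$, the remaining constraints $c(\beta)=c(\beta^{2})=0$ become two nontrivial polynomial equations in~$l$ (one set for each residue of $l$ modulo~$5$), and you must argue that they have no common root in $\mathbb{F}_{p}$ for every~$p$ with $5\mid(p-1)$. That is exactly the step you do not carry out, and your proposal offers no mechanism for eliminating~$l$ from the final identities. So the plan is coherent as a template, but the paper's generator is engineered to make the decisive eliminations parameter-free, and with your generator that feature is lost; whether your code actually attains $d_{p}=8$ is therefore not established by the outline you give.
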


{\it Proof}\,
Let $\mathcal{C}$ be a repeated-root cyclic code of length $5p$ over $\mathbb{F}_{p}$ with the generator polynomial
\begin{equation*}
  g(x)=\left(x-1\right)^{3}\left(x-\beta\right)\left(x-\beta^{2}\right)^2
\end{equation*}
where $\beta$ is a primitive $5$-th root of unity in $\mathbb{F}_{p}$.
It can be verified that $\mathcal{C}$ is an MDS $\left(5p,\,8\right)_{p}$ symbol-pair code by similar techniques used in the proof of Theorem \ref{thmMDS5p7p}.
Since the proof is lengthy and some of them seems a bit cumbersome, we present it in the $\bf{Appendix}$.
\hfill $\Box$

Now we provide two examples to illustrate the constructions in Theorems \ref{thmMDS5p7p} and \ref{thmMDS5p8p}.

\begin{Example}
(1) Let $\mathcal{C}$ be a repeated-root cyclic code of length $55$ over $\mathbb{F}_{11}$ with the generator polynomial
\begin{equation*}
  g(x)=\left(x-1\right)^{3}\left(x-3\right)\left(x-3^2\right).
\end{equation*}
MAGMA experiments show that $\mathcal{C}$ is a $[55,\,50,\,4]$ code and the minimum symbol-pair distance of $\mathcal{C}$ is $7$, which satisfies our result in Theorem \ref{thmMDS5p7p}.

(2) Let $\mathcal{C}$ be a repeated-root cyclic code of length $55$ over $\mathbb{F}_{11}$ with the generator polynomial
\begin{equation*}
  g(x)=\left(x-1\right)^{3}\left(x-3\right)\left(x-3^2\right)^2.
\end{equation*}
By a MAGMA program, it can be checked that $\mathcal{C}$ is a $[55,\,49,\,4]$ code and the minimum symbol-pair distance of $\mathcal{C}$ is $8$, which is consistent with our result in Theorem \ref{thmMDS5p8p}.
\end{Example}

\subsection{MDS symbol-pair codes for $n=4p$}

In this subsection, we shall construct a class of MDS symbol-pair codes with $d_{p}=7$, which generalizes Theorem $3.8$ in \cite{KZZLC}.

\begin{theorem}\label{thmMDS4p7p}
Let $p$ be an odd prime.
Then there exists an MDS $\left(4p,\,7\right)_{p}$ symbol-pair code.
\end{theorem}

{\it Proof}\,
The case $p\equiv 3\left({\rm mod}\,4\right)$ has been settled, see Theorem $3.8$ in \cite{KZZLC}.
For the case $p\equiv 1\left({\rm mod}\,4\right)$,
let $\mathcal{C}$ be a repeated-root cyclic code of length $4p$ over $\mathbb{F}_p$ with the generator polynomial
\begin{equation*}
  g(x)=\left(x-1\right)^{3}\left(x-\omega\right)\left(x+\omega\right)
\end{equation*}
where $\omega$ is a primitive $4$-th root of unity in $\mathbb{F}_{p}$.
In the following, we will claim that for $p\equiv 1\left({\rm mod}\,4\right)$, the code $\mathcal{C}$ is also an MDS $\left(4p,\,7\right)_{p}$ symbol-pair code.

By Lemma \ref{lemdistance}, one can derive that the parameter of $\mathcal{C}$ is $\left[4p,\,4p-5,\,4\right]$.
Since $\mathcal{C}$ is not MDS, by Lemma \ref{lemMDS}, we get $d_{p}(\mathcal{C})\geq 6$.
With a similar argument as Theorem \ref{thmMDS5p7p}, one can obtain that there does not exist a codeword $c(x)$ in $\mathcal{C}$ with $(w_H(c(x)), \,w_p(c(x)))=(5,\,6)$.
In order to show that $\mathcal{C}$ is an MDS $\left(4p,\,7\right)_{p}$ symbol-pair code, we need to prove that there does not exist a codeword in $\mathcal{C}$ with $(w_H(c(x)),\,w_p(c(x)))=(4,\,6)$.

Let $c(x)$ be a codeword in $\mathcal{C}$ with $(w_H(c(x)),\,w_p(c(x)))=(4,\,6)$.
Then Lemma \ref{lemdpdH} indicates that its certain cyclic shift must have the form
\begin{equation*}
  \left(\star,\,\star,\,\star,\,\mathbf{0},\,\star,\,\mathbf{0}\right)
\end{equation*}
or
\begin{equation*}
  \left(\star,\,\star,\,\mathbf{0},\,\star,\,\star,\,\mathbf{0}\right).
\end{equation*}

For the case $\left(\star,\,\star,\,\star,\,\mathbf{0},\,\star,\,\mathbf{0}\right)$, we assume that
\begin{equation*}
  c(x)=1+a_{1}\,x+a_{2}\,x^{2}+a_{3}\,x^{l}
\end{equation*}
for some positive integer $l$ with $4\leq l\leq 4p-2$ and $a_{1},\,a_{2},\,a_{3}\in \mathbb{F}_{p}^{*}$.
It follows from $c\left(1\right)=c^{\left(1\right)}\left(1\right)=c^{(2)}\left(1\right)=0$
that
\begin{equation*}
  \left\{
  \begin{array}{l}
   1+a_{1}+a_{2}+a_{3}=0,\\[2mm]
   a_{1}+2\,a_{2}+l\,a_{3}=0,\\[2mm]
   2\,a_{2}+l\left(l-1\right)a_{3}=0.\\
  \end{array}
  \right.
\end{equation*}
This yields
\begin{equation}\label{eqMDS4p7p3101}
  a_{1}=-\frac{2l}{l-1},\quad a_{2}=\frac{l}{l-2},
  \quad a_{3}=-\frac{2}{\left(l-1\right)\left(l-2\right)}.
\end{equation}

\begin{itemize}
  \item
  If $l$ is even, then we have
  \begin{equation*}
  \left\{
  \begin{array}{l}
    1+a_{1}\,\omega-a_{2}+a_{3}\,\omega^{l}=0,\\[2mm]
    1-a_{1}\,\omega-a_{2}+a_{3}\,\omega^{l}=0\\
  \end{array}
  \right.
  \end{equation*}
  since $c\left(\omega\right)=c\left(-\omega\right)=0$ and $\omega^{2}=-1$.
  It follows that $a_{1}=0$, which is impossible.

  \item
  If $l$ is odd, then $c\left(\omega\right)=c\left(-\omega\right)=0$ indicates that
  \begin{equation*}
  \left\{
  \begin{array}{l}
    1+a_{1}\,\omega-a_{2}+a_{3}\,\omega^{l}=0,\\[2mm]
    1-a_{1}\,\omega-a_{2}-a_{3}\,\omega^{l}=0.\\
  \end{array}
  \right.
  \end{equation*}
  This implies that $a_{2}=1$, which contradicts with the result in $\left(\ref{eqMDS4p7p3101}\right)$.
\end{itemize}

For the remaining case $\left(\star,\,\star,\,\mathbf{0},\,\star,\,\star,\,\mathbf{0}\right)$, we suppose that
\begin{equation*}
  c(x)=1+a_{1}\,x+a_{2}\,x^{l}+a_{3}\,x^{l+1}
\end{equation*}
for some positive integer $l$ with $3\leq l\leq 4p-3$ and $a_{1},\,a_{2},\,a_{3}\in \mathbb{F}_{p}^{*}$.
Then $c\left(1\right)=c^{\left(1\right)}\left(1\right)=c^{\left(2\right)}\left(1\right)=0$ indicates that
\begin{equation}\label{eqMDS4p7p2201}
  \left\{
  \begin{array}{l}
   1+a_{1}+a_{2}+a_{3}=0,\\[2mm]
   a_{1}+l\,a_{2}+\left(l+1\right)a_{3}=0,\\[2mm]
   l\left(l-1\right)a_{2}+l\left(l+1\right)a_{3}=0.\\
  \end{array}
  \right.
\end{equation}
It follows from $c\left(\omega\right)=c\left(-\omega\right)=0$ that
\begin{equation}\label{eqMDS4p7p2202}
  \left\{
  \begin{array}{l}
   1+a_{1}\,\omega+a_{2}\,\omega^{l}+a_{3}\,\omega^{l+1}=0,\\[2mm]
   1-a_{1}\,\omega+a_{2}\left(-\omega\right)^{l}+a_{3}\left(-\omega\right)^{l+1}=0.\\
  \end{array}
  \right.
\end{equation}
Now we divide the proof into the following two subcases:

\begin{itemize}
  \item
  For the subcase $p\,|\,l$, $\left(\ref{eqMDS4p7p2201}\right)$ yields that
  \begin{equation}\label{eqMDS4p7p220101}
    a_{1}+a_{3}=0,\quad a_{2}=-1.
  \end{equation}
  If $l$ is even, then we have $l=2p$ due to $3\leq l\leq 4p-3$.
  It follows from $\left(\ref{eqMDS4p7p2202}\right)$ and $\left(\ref{eqMDS4p7p220101}\right)$ that
  \begin{equation*}
    1=\omega^{l}=\omega^{2p}=\left(-1\right)^p
  \end{equation*}
  which is impossible.
  Similarly, if $l$ is odd, one can obtain that $\omega^{2l}=1$, a contradiction.

  \item
  For the subcase $p\nmid l$, it follows from $\left(\ref{eqMDS4p7p2201}\right)$ that
  \begin{equation}\label{eqMDS4p7p220201}
    a_{1}=-\frac{l+1}{l-1},\quad a_{2}=\frac{l+1}{l-1},\quad a_{3}=-1.
  \end{equation}
  If $l$ is even, then by $\left(\ref{eqMDS4p7p2202}\right)$ and $\left(\ref{eqMDS4p7p220201}\right)$, one can deduce that
  \begin{equation*}
    a_1=\omega^l, \quad 1+a_2\,\omega^l=0.
  \end{equation*}
  Then one can obtain that
  \begin{equation*}
    1=a_{1}^2=\left(-\frac{l+1}{l-1}\right)^{2}.
  \end{equation*}
  This implies $4l=0$, a contradiction.
  By a similar manner, for odd $l$, one can derive that $\omega^{l+1}=\omega^{l-1}=1$, which is impossible.
\end{itemize}

Consequently, $\mathcal{C}$ is an MDS $\left(4p,\,7\right)_{p}$ symbol-pair code.
This proves the desired conclusion.
\hfill $\Box$

Now we give an example to illustrate the construction in Theorem \ref{thmMDS4p7p}.

\begin{Example}
Let $\mathcal{C}$ be a repeated-root cyclic code of length $20$ over $\mathbb{F}_{5}$ with the generator polynomial
\begin{equation*}
  g(x)=\left(x-1\right)^{3}\left(x-2\right)\left(x+2\right).
\end{equation*}
It can be checked by MAGMA that $\mathcal{C}$ is a $[20,\,15,\,4]$ code and the minimum symbol-pair distance of $\mathcal{C}$ is $7$, which coincides with our result in Theorem \ref{thmMDS4p7p}.
\end{Example}

\section{Conclusions and future work}

In this paper, three new classes of MDS symbol-pair codes over $\mathbb{F}_{p}$ with $p$ an odd prime were constructed from repeated-root cyclic codes.
Firstly, for $n=5p$, two classes of MDS symbol-pair codes with minimum symbol-pair distance seven or eight were presented.
In addition, for length $n=4p$, we derived a class of MDS symbol-pair codes with $d_{p}=7$ and our construction extends the result in \cite{KZZLC}.
Note that by utilizing repeated-root cyclic codes, one can construct MDS symbol-pair codes by transforming the problem into analyzing the solutions of certain equations over finite fields.

However, it seems impracticable to construct $(5q,\,7)_p$, $(5q,\,8)_p$ and $(4q,\,7)_p$ MDS symbol-pair codes with $q$ being a power of $p$ using the techniques in Theorems $1$-$3$. For instance, for the case $q=p^2$, $5\,|\,(q-1)$, let $\mathcal{C}$ be a repeated-root cyclic code of length $5q$ over $\mathbb{F}_q$ with the generator polynomial of the form
\begin{equation*}
  g(x)=\left(x-1\right)^{e_1}\left(x-\omega\right)^{e_2}\left(x-\omega^{2}\right)^{e_3}
  \left(x-\omega^{3}\right)^{e_4}\left(x-\omega^{4}\right)^{e_5}
\end{equation*}
where $\omega$ is a primitive $5$-th root of unity in $\mathbb{F}_{q}$.
It can be checked that $\mathcal{C}$ is not an MDS symbol-pair code. It needs further study to construct MDS symbol-pair codes with larger minimum symbol-pair distance and length $lq$, where $q=p^m$ with $m>1$.

\hspace*{\fill}\\
\noindent
{\bf Appendix Proof of Theorem \ref{thmMDS5p8p}:}
\hspace*{\fill}\\
\\
Recall that $\mathcal{C}$ is a repeated-root cyclic code of length $5p$ over $\mathbb{F}_{p}$ with the generator polynomial
\begin{equation*}
  g(x)=\left(x-1\right)^{3}\left(x-\beta\right)\left(x-\beta^{2}\right)^{2}
\end{equation*}
where $\beta$ is a primitive $5$-th root of unity in $\mathbb{F}_{p}$.
By Lemma \ref{lemdistance}, one can derive that the parameter of $\mathcal{C}$ is $\left[5p,\,5p-6,\,4\right]$.
Since $\mathcal{C}$ is not MDS, Lemma \ref{lemMDS} yields that $d_{p}(\mathcal{C})\geq 6$.
Similar as Theorem \ref{thmMDS5p7p}, one can derive that there does not exist a codeword $c(x)$ in $\mathcal{C}$ with $(w_H(c(x)), \,w_p(c(x)))=(5,\,6)$ or $(6,\,7)$.
To prove that $\mathcal{C}$ is an MDS $\left(5p,\,8\right)_{p}$ symbol-pair code, it suffices to determine that there does not exist a codeword in $\mathcal{C}$ with $(w_H(c(x)), \,w_p(c(x)))=(4,\,6)$, $(4,\,7)$ or $(5,\,7)$.

{\bf Case I}:\,\,  $(w_H(c(x)), \,w_p(c(x)))=(4,\,6)$.
Since $\mathcal{C}$ is the subcode of the code in Theorem \ref{thmMDS5p7p} and the proof of Theorem \ref{thmMDS5p7p} indicates that there does not exist a codeword $c(x)$ in $\mathcal{C}$ with $(w_H(c(x)), \,w_p(c(x)))=(4,\,6)$ unless $p=41$.
Now it is sufficient to show that for $p=41$, there does not exist a codeword $c(x)$ in $\mathcal{C}$ with $(w_H(c(x)), \,w_p(c(x)))=(4,\,6)$.
More precisely, we just need to consider {\bf{Case II}} in Theorem \ref{thmMDS5p7p}.
There are two subcases to be discussed:

\begin{itemize}
  \item
  Consider the subcase $c(x)=1+a_{1}\,x+a_{2}\,x^{2}+a_{3}\,x^{5i+3}$ with $1\leq i\leq p-1$ and $a_{1},\,a_{2},\,a_{3}\in \mathbb{F}_p^{*}$.
  Notice that $c\left(1\right)=c\left(\beta\right)=c\left(\beta^2\right)=0$ and Lemma \ref{lemsolution} induces
  \begin{equation}\label{eqMDS5p8pNv4310101}
    a_{1}=-\frac{\beta^2+\beta+1}{\beta^2},\quad a_{2}=\frac{\beta^2+\beta+1}{\beta^3},\quad a_{3}=-\frac{1}{\beta^3}.
  \end{equation}
  It follows from $c^{\left(1\right)}\left(1\right)=c^{\left(1\right)}\left(\beta^2\right)=0$ that
  \begin{equation*}
  \left\{
  \begin{array}{l}
    a_{1}+2\,a_{2}+\left(5i+3\right)a_{3}=0,\\[2mm]
    a_{1}+2\,a_{2}\,\beta^2+\left(5i+3\right)a_{3}\,\beta^4=0\\
  \end{array}
  \right.
  \end{equation*}
  which yields
  \begin{equation*}
    a_1\left(\beta^4-1\right)+2\,a_2\left(\beta^4-\beta^2\right)=0.
  \end{equation*}
  Combining with (\ref{eqMDS5p8pNv4310101}), one can get $\left(\beta-1\right)^2=0$, a contradiction.

  \item
  For the subcase $c(x)=1+a_{1}\,x+a_{2}\,x^{2}+a_{3}\,x^{5i+4}$ with $0\leq i\leq p-2$ and $a_{1},\,a_{2},\,a_{3}\in \mathbb{F}_p^{*}$, by $c\left(1\right)=c\left(\beta\right)=c\left(\beta^2\right)=0$ and Lemma \ref{lemsolution}, one can obtain that
  \begin{equation}\label{eqMDS5p8pNv4310201}
    a_{1}=-\frac{1}{\beta},\quad a_{2}=-\frac{\beta}{\beta+1},\quad a_{3}=\frac{1}{\beta\left(\beta+1\right)}.
  \end{equation}
  On the other hand,
  $c^{\left(1\right)}\left(1\right)=c^{\left(1\right)}\left(\beta^2\right)=0$ that
  \begin{equation*}
  \left\{
  \begin{array}{l}
    a_{1}+2\,a_{2}+\left(5i+4\right)a_{3}=0,\\[2mm]
    a_{1}+2\,a_{2}\,\beta^2+\left(5i+4\right)a_{3}\,\beta=0\\
  \end{array}
  \right.
  \end{equation*}
  which induces
  \begin{equation*}
    a_1\left(\beta-1\right)=2\,a_2\left(\beta^2-\beta\right).
  \end{equation*}
  Together with $\left(\ref{eqMDS5p8pNv4310201}\right)$, one can immediately obtain that
  \begin{equation*}
    2\,\beta^3=\beta+1.
  \end{equation*}
  This leads to
  \begin{equation*}
    \left(\beta-1\right)\left(2\,\beta^2+2\,\beta+1\right)=0.
  \end{equation*}
  The fact $\beta$ is a primitive $5$-th root of unity implies that $2\beta^2+2\beta+1=0$ and then one has
  \begin{equation*}
    \beta^2+\beta=-\left(\beta^2+\beta+1\right)=\beta^4+\beta^3
  \end{equation*}
  which is impossible.
\end{itemize}

{\bf Case II}:\,\,  $(w_H(c(x)), \,w_p(c(x)))=(4,\,7)$.
For this case, Lemma \ref{lemdpdH} implies that the cyclic shift of $c(x)$ must have the form
\begin{equation*}
  \left(\star,\,\star,\,\mathbf{0},\,\star,\,\mathbf{0},\,\star,\,\mathbf{0}\right).
\end{equation*}
Assume that $c(x)=\left(x^{5}-1\right)^{t}v(x)$, where $0\leq t\leq p-1$, $\left(x^{5}-1\right)\nmid v(x)$ and
\begin{equation*}
  v(x)=v_{0}(x^{5})+x\,v_{1}(x^{5})+x^{2}\,v_{2}(x^{5})+x^{3}\,v_{3}(x^{5})+x^{4}\,v_{4}(x^{5}).
\end{equation*}
Recall that $N_{v}=w_{H}\left(v(x)\,{\rm mod}\left(x^{5}-1\right)\right)$.
Then by Lemma \ref{lemdH}, one can deduce that
\begin{equation*}
  4=w_{H}\left(\left(x^{5}-1\right)^{t}\right)\cdot w_{H}\left(v(x)\,{\rm mod}\left(x^{5}-1\right)\right)=\left(1+t\right)N_{v}.
\end{equation*}

If $\left(N_{v},\,t\right)=\left(1,\,3\right)$, then it is easily seen that the symbol-pair weight of $c(x)$ is greater than $7$.

If $\left(N_{v},\,t\right)=\left(2,\,1\right)$, then there are three subcases to be discussed:

  (1) For the subcase $c(x)=1+a_{1}\,x+a_{2}\,x^{5i}+a_{3}\,x^{5j}$ with $1\leq i<j \leq p-1$ and $a_{1},\,a_{2},\,a_{3}\in \mathbb{F}_p^{*}$, it can be verified that
  \begin{equation*}
  \left\{
  \begin{array}{l}
    1+a_{1}\,\beta+a_{2}+a_{3}=0,\\[2mm]
    1+a_{1}\,\beta^2+a_{2}+a_{3}=0\\
  \end{array}
  \right.
  \end{equation*}
  since $c\left(\beta\right)=c\left(\beta^2\right)=0$.
  Then one can obtain that $a_{1}=0$, a contradiction.

  (2) For the subcase $c(x)=1+a_{1}\,x+a_{2}\,x^{5i+1}+a_{3}\,x^{5j+1}$ with $1\leq i<j \leq p-1$ and $a_{1},\,a_{2},\,a_{3}\in \mathbb{F}_p^{*}$, by $c\left(1\right)=c\left(\beta\right)=0$, one can get
  \begin{equation*}
  \left\{
  \begin{array}{l}
    1+a_{1}+a_{2}+a_{3}=0,\\[2mm]
    1+a_{1}\,\beta+a_{2}\,\beta+a_{3}\,\beta=0.\\
  \end{array}
  \right.
  \end{equation*}
  This implies that $\beta=1$, which is impossible.

  (3) For the subcase $c(x)=1+a_{1}\,x+a_{2}\,x^{5i}+a_{3}\,x^{5j+1}$ with $1\leq i<j \leq p-1$ and $a_{1},\,a_{2},\,a_{3}\in \mathbb{F}_p^{*}$, it follows from $c^{\left(1\right)}\left(1\right)=c^{\left(1\right)}\left(\beta^2\right)=0$ that
  \begin{equation*}
  \left\{
  \begin{array}{l}
    a_{1}+5i\,a_{2}+\left(5j+1\right)a_{3}=0,\\[2mm]
    a_{1}+5i\,a_{2}\,\beta^3+\left(5j+1\right)a_{3}=0.\\
  \end{array}
  \right.
  \end{equation*}
  This leads to $\beta^3=1$, a contradiction.

  If $\left(N_{v},\,t\right)=\left(4,\,0\right)$, then there are also three subcases to be considered:

  (1) For the subcase $c(x)=1+a_{1}\,x+a_{2}\,x^{5i+2}+a_{3}\,x^{5j+3}$ with $1\leq i<j\leq p-1$ and $a_{1},\,a_{2},\,a_{3}\in \mathbb{F}_p^{*}$, by Lemma \ref{lemsolution} and $c\left(1\right)=c\left(\beta\right)=c\left(\beta^2\right)=0$, one can derive that
  \begin{equation}\label{eq4740101}
    a_{1}=-\frac{\beta^2+\beta+1}{\beta^2},\quad a_{2}=\frac{\beta^2+\beta+1}{\beta^3},\quad
    a_{3}=-\frac{1}{\beta^3}.
  \end{equation}
  It follows from $c^{\left(1\right)}\left(1\right)=c^{\left(1\right)}\left(\beta^2\right)=0$ that
  \begin{equation*}
  \left\{
  \begin{array}{l}
    a_{1}+\left(5i+2\right)a_{2}+\left(5j+3\right)a_{3}=0,\\[2mm]
    a_{1}+\left(5i+2\right)a_{2}\,\beta^2+\left(5j+3\right)a_{3}\,\beta^4=0\\
  \end{array}
  \right.
  \end{equation*}
  which indicates
  \begin{equation*}
  \left\{
  \begin{array}{l}
    \left(\beta^4-1\right)a_{1}+(\beta^4-\beta^2)\left(5i+2\right)a_{2}=0,\\[2mm]
    \left(\beta^2-1\right)\left(5i+2\right)a_{2}+\left(5j+3\right)\left(\beta^4-1\right)a_{3}=0.\\
  \end{array}
  \right.
  \end{equation*}
  Together with $\left(\ref{eq4740101}\right)$, one can immediately obtain that
  \begin{equation}\label{eq4740102}
  \left\{
  \begin{array}{l}
    \beta^2+1=\left(5i+2\right)\beta,\\[2mm]
    \left(5i+2\right)\left(\beta^2+\beta+1\right)=\left(5j+3\right)\left(\beta^2+1\right).\\
  \end{array}
  \right.
  \end{equation}
  By substituting the value of $\beta^2+1$ in the first equality into the second equality of (\ref{eq4740102}), we can get
  \begin{equation*}
    \left(5i+2\right)\left(5i+3\right)\beta=\left(5i+2\right)\left(5j+3\right)\beta
  \end{equation*}
  which yields $i=j$ due to $p\nmid (5i+2)$.
  This contradicts with $i<j$.

  (2) Consider the subcase $c(x)=1+a_{1}\,x+a_{2}\,x^{5i+2}+a_{3}\,x^{5j+4}$ with $1\leq i\leq j\leq p-2$ and $a_{1},\,a_{2},\,a_{3}\in \mathbb{F}_p^{*}$.
  The fact $c\left(1\right)=c\left(\beta\right)=c\left(\beta^2\right)=0$ and Lemma \ref{lemsolution} leads to
  \begin{equation}\label{eq474201}
    a_{1}=-\frac{1}{\beta},\quad a_{2}=-\frac{\beta}{\beta+1},\quad a_{3}=\frac{1}{\beta\left(\beta+1\right)}.
  \end{equation}
  It follows from $c^{\left(1\right)}\left(1\right)=c^{\left(1\right)}\left(\beta^2\right)=0$ that
  \begin{equation*}
  \left\{
  \begin{array}{l}
    a_{1}=\beta\left(5i+2\right)a_2,\\[2mm]
    \left(5i+2\right)\left(\beta+1\right)a_{2}+\left(5j+4\right)a_{3}=0.\\
  \end{array}
  \right.
  \end{equation*}
  By substituting $\left(\ref{eq474201}\right)$, one can immediately derive that
  \begin{equation}\label{eq474202}
  \left\{
  \begin{array}{l}
    \beta+1=\left(5i+2\right)\beta^3,\\[2mm]
    \left(5i+2\right)\beta^2\left(\beta+1\right)=5j+4.\\
  \end{array}
  \right.
  \end{equation}
  This leads to $\left(5i+2\right)^2=5j+4$.
  Since it can be verified that $p\nmid \left(5i+2\right)$, it follows from $c^{\left(2\right)}\left(1\right)=0$ that
  \begin{equation}\label{eq474203}
    \beta^2=\left(5i+2\right)\left(5i+3\right).
  \end{equation}
  Then $\left(\ref{eq474201}\right)$ and $c^{\left(1\right)}\left(1\right)=0$ indicates that
  \begin{equation}\label{eq474204}
    \left(5i+2\right)\beta^2+\beta-\left(5j+3\right)=0.
  \end{equation}
  Let $t=5i+2$. Then one has $\beta+1=t\beta^3$ and $\beta^2=t\left(t+1\right)$ due to the first equality of (\ref{eq474202}) and (\ref{eq474203}).
  It follows from (\ref{eq474204}) that
  \begin{equation*}
    t^2(t+1)+\beta-(t^2-1)=0
  \end{equation*}
  which implies $\beta+1=-t^3$.
  Combining with $\beta+1=t\beta^3$, we have $\beta^3=-t^2$.
  Since $\beta$ is a primitive $5$-th root of unity, one can derive
  \begin{equation*}
  \begin{split}
  0=&\,\beta^4+\beta^3+\beta^2+\beta+1
  \\
  =&\left(\beta+1\right)\left(\beta^3+1\right)+\beta^2
  \\
  =&-t^3\left(-t^2+1\right)+t\left(t+1\right)
  \\
  =&\,t\left(t+1\right)\left(t^3-t^2+1\right).
  \end{split}
  \end{equation*}
  It follows from $t\left(t+1\right)=\beta^2\neq 0$ that $t^3-t^2+1=0$.
  Then we obtain
  \begin{equation*}
    \beta=-t^3-1=-t^2=\beta^3
  \end{equation*}
  which yields $\beta^2-1=0$, a contradiction.

  (3) For the subcase $c(x)=1+a_{1}\,x+a_{2}\,x^{5i+3}+a_{3}\,x^{5j+4}$ with $0\leq i<j\leq p-2$ and $a_{1},\,a_{2},\,a_{3}\in \mathbb{F}_p^{*}$, it follows from $c\left(1\right)=c\left(\beta\right)=c\left(\beta^2\right)=0$ and Lemma \ref{lemsolution} that
  \begin{equation}\label{eq474301}
    a_{1}=\frac{\beta^2}{\beta+1},\quad a_{2}=-\frac{1}{\beta+1},\quad a_{3}=-\beta.
  \end{equation}
  Since $c^{\left(1\right)}\left(1\right)=c^{\left(1\right)}\left(\beta^2\right)=0$, one can immediately get
  \begin{equation*}
  \left\{
  \begin{array}{l}
    \left(5i+3\right)\left(\beta^4-1\right)a_{2}+\left(5j+4\right)\left(\beta-1\right)a_{3}=0,\\[2mm]
    a_{1}(\beta-1)=\left(5i+3\right)\left(\beta^4-\beta\right)a_{2}.\\
  \end{array}
  \right.
  \end{equation*}
  Together with $\left(\ref{eq474301}\right)$, one can conclude that
  \begin{equation*}
  \left\{
  \begin{array}{l}
    \left(5i+3\right)\left(\beta^2+1\right)+(5j+4)\beta=0,\\[2mm]
    \left(5i+3\right)\left(\beta^2+\beta+1\right)+\beta=0.\\
  \end{array}
  \right.
  \end{equation*}
  which indicates
  \begin{equation*}
  \left\{
  \begin{array}{l}
    \left(5i+3\right)\beta^2+(5j+4)\beta+5i+3=0,\\[2mm]
    \left(5i+3\right)\beta^2+(5i+4)\beta+5i+3=0.\\
  \end{array}
  \right.
  \end{equation*}
  It follows that $5(i-j)=0$, a contradiction.

{\bf Case III}:\,\,$(w_H(c(x)), \,w_p(c(x)))=(5,\,7)$.
In this case, we can assume that $c(x)$ is of the form
\begin{equation*}
  \left(\mathbf{a},\,\mathbf{0},\,\mathbf{b},\,\mathbf{0}\right)
\end{equation*}
where $\mathbf{a}$, $\mathbf{b}$ are row vectors with all entries of $\mathbf{a}$, $\mathbf{b}$ being nonzero.
Then its certain cyclic shift must have the form
\begin{equation*}
  \left(\star,\,\star,\,\star,\,\star,\,\mathbf{0},\,\star,\,\mathbf{0}\right)
\end{equation*}
or
\begin{equation*}
  \left(\star,\,\star,\,\star,\,\mathbf{0},\,\star,\,\star,\,\mathbf{0}\right).
\end{equation*}

\begin{itemize}
  \item
  For $\left(\star,\,\star,\,\star,\,\star,\,\mathbf{0},\,\star,\,\mathbf{0}\right)$, there are five subcases to be considered:

  (1) Consider the subcase $c(x)=1+a_{1}\,x+a_{2}\,x^{2}+a_{3}\,x^3+a_{4}\,x^{5i}$ with $1\leq i\leq p-1$ and $a_{1},\,a_{2},\,a_{3},\,a_{4}\in \mathbb{F}_{p}^{*}$.
  It can be verified that
  \begin{equation*}
  \left\{
  \begin{array}{l}
    1+a_{1}+a_{2}+a_{3}+a_{4}=0,\\[2mm]
    1+a_{1}\,\beta+a_{2}\,\beta^2+a_{3}\,\beta^3+a_{4}=0,\\[2mm]
    1+a_{1}\,\beta^2+a_{2}\,\beta^4+a_{3}\,\beta+a_{4}=0\\
  \end{array}
  \right.
  \end{equation*}
  since $c\left(1\right)=c\left(\beta\right)=c\left(\beta^2\right)=0$.
  Then one can derive that $p\nmid \left(a_{4}+1\right)$.
  By Lemma \ref{lemsolution}, one can obtain
  \begin{equation}\label{eq574101}
    a_{1}=-\frac{\beta^2+\beta+1}{\beta^2}(a_{4}+1),
    a_{2}=\frac{\beta^2+\beta+1}{\beta^3}(a_{4}+1),
    a_{3}=-\frac{1}{\beta^3}(a_{4}+1).
  \end{equation}
  It follows from $c^{\left(1\right)}\left(1\right)=c^{\left(1\right)}\left(\beta^2\right)=0$ that
  \begin{equation*}
  \left\{
  \begin{array}{l}
    a_{1}+2\,a_{2}+3\,a_{3}+5i\,a_{4}=0,\\[2mm]
    a_{1}+2\,a_{2}\,\beta^2+3\,a_{3}\,\beta^4+5i\,a_{4}\,\beta^3=0\\
  \end{array}
  \right.
  \end{equation*}
  which indicates
  \begin{equation*}
    \left(\beta^3-1\right)a_{1}+2\left(\beta^3-\beta^2\right)a_{2}
    +3\left(\beta^3-\beta^4\right)a_{3}=0.
  \end{equation*}
  Combining with $\left(\ref{eq574101}\right)$, one can derive that
  \begin{equation*}
    -\left(\beta^3-1\right)\beta\left(\beta^2+\beta+1\right)+2\beta^2\left(\beta-1\right)\left(\beta^2+\beta+1\right)
    +3\beta^3\left(\beta-1\right)=0.
  \end{equation*}
  Since $\beta$ is a primitive $5$-th root of unity, by expanding the above equality, one can get
  $\beta^2+3\beta+1=0$.
  This is contradictory with the inequality (\ref{eqbelta}) in Lemma \ref{lemsolution}.

  (2) Consider the subcase $c(x)=1+a_{1}\,x+a_{2}\,x^2+a_{3}\,x^3+a_{4}\,x^{5i+1}$ with $1\leq i\leq p-1$ and $a_{1},\,a_{2},\,a_{3},\,a_{4}\in \mathbb{F}_{p}^{*}$.
  It follows from $c\left(1\right)=c\left(\beta\right)=c\left(\beta^2\right)=0$ and Lemma \ref{lemsolution} that
  \begin{equation}\label{eq574102}
    a_{1}+a_{4}=-\frac{\beta^2+\beta+1}{\beta^2},\quad
    a_{2}=\frac{\beta^2+\beta+1}{\beta^3},\quad
    a_{3}=-\frac{1}{\beta^{3}}.
  \end{equation}
  Then $c^{\left(1\right)}\left(1\right)=c^{\left(1\right)}\left(\beta^2\right)=0$ induces that
  \begin{equation*}
  \left\{
  \begin{array}{l}
    a_{1}+2\,a_{2}+3\,a_{3}+\left(5i+1\right)a_{4}=0,\\[2mm]
    a_{1}+2\,a_{2}\,\beta^2+3\,a_{3}\,\beta^4+\left(5i+1\right)a_{4}=0.\\
  \end{array}
  \right.
  \end{equation*}
  This leads to
  \begin{equation*}
    2\left(\beta^2-1\right)a_{2}+3\left(\beta^4-1\right)a_{3}=0.
  \end{equation*}
  Together with $\left(\ref{eq574102}\right)$, one can immediately get $\left(\beta-1\right)^2=0$, which is impossible.

  (3) Consider the subcase $c(x)=1+a_{1}\,x+a_{2}\,x^2+a_{3}\,x^3+a_{4}\,x^{5i+2}$ with $1\leq i\leq p-1$ and $a_{1},\,a_{2},\,a_{3},\,a_{4}\in \mathbb{F}_{p}^{*}$.
  The fact $c\left(1\right)=c\left(\beta\right)=c\left(\beta^2\right)=0$ and Lemma \ref{lemsolution} induces
  \begin{equation}\label{eq574103}
    a_{1}=-\frac{\beta^2+\beta+1}{\beta^2},\quad
    a_{2}+a_{4}=\frac{\beta^2+\beta+1}{\beta^3},\quad
    a_{3}=-\frac{1}{\beta^3}.
  \end{equation}
  It follows from $c^{\left(1\right)}\left(1\right)=c^{\left(1\right)}\left(\beta^2\right)=0$ that
  \begin{equation*}
  \left\{
  \begin{array}{l}
    a_{1}+2\,a_{2}+3\,a_{3}+\left(5i+2\right)a_{4}=0,\\[2mm]
    a_{1}+2\,a_{2}\,\beta^2+3\,a_{3}\,\beta^4+\left(5i+2\right)a_{4}\,\beta^2=0\\
  \end{array}
  \right.
  \end{equation*}
  which implies
  \begin{equation*}
    \left(\beta^2-1\right)a_{1}+3\left(\beta^2-\beta^4\right)a_{3}=0.
  \end{equation*}
  By substituting $\left(\ref{eq574103}\right)$ into the above equality, we have $\left(\beta-1\right)^2=0$, a contradiction.

  (4) Consider the subcase $c(x)=1+a_{1}\,x+a_{2}\,x^2+a_{3}\,x^3+a_{4}\,x^{5i+3}$ with $1\leq i\leq p-1$ and $a_{1},\,a_{2},\,a_{3},\,a_{4}\in \mathbb{F}_{p}^{*}$.
  By $c\left(1\right)=c\left(\beta\right)=c\left(\beta^2\right)=0$ and Lemma \ref{lemsolution}, one has
  \begin{equation}\label{eq574104}
    a_{1}=-\frac{\beta^2+\beta+1}{\beta^2},\quad
    a_{2}=\frac{\beta^2+\beta+1}{\beta^3},\quad
    a_{3}+a_{4}=-\frac{1}{\beta^3}.
  \end{equation}
  It follows from $c^{\left(1\right)}\left(1\right)=c^{\left(1\right)}\left(\beta^2\right)=0$ that
  \begin{equation*}
  \left\{
  \begin{array}{l}
    a_{1}+2\,a_{2}+3\,a_{3}+\left(5i+3\right)a_{4}=0,\\[2mm]
    a_{1}+2\,a_{2}\,\beta^2+3\,a_{3}\,\beta^4+\left(5i+3\right)a_{4}\,\beta^4=0.\\
  \end{array}
  \right.
  \end{equation*}
  This yields
  \begin{equation*}
    \left(\beta^4-1\right)a_{1}+2\left(\beta^4-\beta^2\right)a_{2}=0.
  \end{equation*}
  Combining with $\left(\ref{eq574104}\right)$, one can derive that $\left(\beta-1\right)^2=0$, which is impossible.

  (5) Consider the subcase $c(x)=1+a_{1}\,x+a_{2}\,x^2+a_{3}\,x^3+a_{4}\,x^{5i+4}$ with $1\leq i\leq p-2$ and $a_{1},\,a_{2},\,a_{3},\,a_{4}\in \mathbb{F}_{p}^{*}$.
  It can be verified that
  \begin{equation*}
  \left\{
  \begin{array}{l}
    1+a_{1}+a_{2}+a_{3}+a_{4}=0,\\[2mm]
    1+a_{1}\,\beta+a_{2}\,\beta^2+a_{3}\,\beta^3+a_{4}\,\beta^4=0,\\[2mm]
    1+a_{1}\,\beta^2+a_{2}\,\beta^4+a_{3}\,\beta+a_{4}\,\beta^3=0\\
  \end{array}
  \right.
  \end{equation*}
  since $c\left(1\right)=c\left(\beta\right)=c\left(\beta^2\right)=0$.
  Then one can obtain that
  \begin{equation}\label{eq574105}
  \left\{
  \begin{array}{l}
    a_{1}=-\beta^3\,a_{4}+\beta^2+\beta,\\[2mm]
    a_{2}=-\left(\beta^4+1\right)a_{4}-\beta-1,\\[2mm]
    a_{3}=-\left(\beta^2+\beta+1\right)a_{4}-\beta^2.\\
  \end{array}
  \right.
  \end{equation}
  It follows from $c^{\left(1\right)}\left(1\right)=c^{\left(1\right)}\left(\beta^2\right)=0$ that
  \begin{equation*}
  \left\{
  \begin{array}{l}
    a_{1}+2\,a_{2}+3\,a_{3}+\left(5i+4\right)a_{4}=0,\\[2mm]
    a_{1}+2\,a_{2}\,\beta^2+3\,a_{3}\,\beta^4+\left(5i+4\right)a_{4}\,\beta=0\\
  \end{array}
  \right.
  \end{equation*}
  which implies
  \begin{equation*}
    \left(\beta-1\right)a_{1}+2\left(\beta-\beta^2\right)a_{2}
    +3\left(\beta-\beta^4\right)a_{3}=0.
  \end{equation*}
  This is equivalent to
  \begin{equation*}
    a_1-2\,\beta\,a_2-3\,\beta\left(\beta^2+\beta+1\right)a_3=0.
  \end{equation*}
  Together with $\left(\ref{eq574105}\right)$, one can immediately have
  \begin{equation*}
    \left(-\beta^3+2\beta(\beta^4+1)+3\,\beta\left(\beta^2+\beta+1\right)^2\right)a_4+\beta^2+\beta+2\,\beta(\beta+1)
    +3\,\beta^3\left(\beta^2+\beta+1\right)=0.
  \end{equation*}
  Then we get that
  \begin{equation*}
    -\beta^3+2\beta\left(\beta^4+1\right)+3\,\beta\left(\beta^2+\beta+1\right)^2=0
  \end{equation*}
  due to $\beta^4+\beta^3+\beta^2+\beta+1=0$ and $a_4\in \mathbb{F}_{p}^{*}$.
  By a straightforward computation, one has $\beta^2+3\beta+1=0$.
  This contradicts with the inequality (\ref{eqbelta}) in Lemma \ref{lemsolution}.

  \item
  For $\left(\star,\,\star,\,\star,\,\mathbf{0},\,\star,\,\star,\,\mathbf{0}\right)$, there are also five subcases to be considered:

  (1) Consider the subcase $c(x)=1+a_{1}\,x+a_{2}\,x^2+a_{3}\,x^{5i}+a_{4}\,x^{5i+1}$ with $1\leq i\leq p-1$ and $a_{1},\,a_{2},\,a_{3},\,a_{4}\in \mathbb{F}_{p}^{*}$.
  It follows from $c\left(1\right)=c\left(\beta\right)=c\left(\beta^2\right)=0$ that
  \begin{equation*}
  \left\{
  \begin{array}{l}
    1+a_{1}+a_{2}+a_{3}+a_{4}=0,\\[2mm]
    1+a_{1}\,\beta+a_{2}\,\beta^2+a_{3}+a_{4}\,\beta=0,\\[2mm]
    1+a_{1}\,\beta^2+a_{2}\,\beta^4+a_{3}+a_{4}\,\beta^2=0\\
  \end{array}
  \right.
  \end{equation*}
  which implies
  \begin{equation*}
  \left\{
  \begin{array}{l}
    \left(a_{1}+a_{4}\right)\left(\beta-1\right)
    +a_{2}\left(\beta^2-1\right)=0,\\[2mm]
    \left(a_{1}+a_{4}\right)\left(\beta^2-\beta\right)
    +a_{2}\left(\beta^4-\beta^2\right)=0.\\
  \end{array}
  \right.
  \end{equation*}
  This indicates that $\beta\left(\beta^2-1\right)a_{2}=\left(\beta^4-\beta^2\right)a_{2}$.
  Hence $\beta=1$, a contradiction.

  (2) Consider the subcase $c(x)=1+a_{1}\,x+a_{2}\,x^2+a_{3}\,x^{5i+1}+a_{4}\,x^{5i+2}$ with $1\leq i\leq p-1$ and $a_{1},\,a_{2},\,a_{3},\,a_{4}\in \mathbb{F}_{p}^{*}$.
  It can be verified that
  \begin{equation*}
  \left\{
  \begin{array}{l}
    1+a_{1}+a_{2}+a_{3}+a_{4}=0,\\[2mm]
    1+a_{1}\,\beta+a_{2}\,\beta^2+a_{3}\,\beta+a_{4}\,\beta^2=0,\\[2mm]
    1+a_{1}\,\beta^2+a_{2}\,\beta^4+a_{3}\,\beta^2+a_{4}\,\beta^4=0\\
  \end{array}
  \right.
  \end{equation*}
  since $c\left(1\right)=c\left(\beta\right)=c\left(\beta^2\right)=0$.
  Then one can derive that
  \begin{equation*}
  \left\{
  \begin{array}{l}
    \left(a_{2}+a_{4}\right)\left(\beta^2-\beta\right)=\beta-1,\\[2mm]
    \left(a_{2}+a_{4}\right)\left(\beta^4-\beta^3\right)=\beta-1.\\
  \end{array}
  \right.
  \end{equation*}
  It follows that $\beta^3=\beta$, which is impossible.

  (3) Consider the subcase $c(x)=1+a_{1}\,x+a_{2}\,x^2+a_{3}\,x^{5i+2}+a_{4}\,x^{5i+3}$ with $1\leq i\leq p-1$ and $a_{1},\,a_{2},\,a_{3},\,a_{4}\in \mathbb{F}_{p}^{*}$.
  The fact $c\left(1\right)=c\left(\beta\right)=c\left(\beta^2\right)=0$ and Lemma \ref{lemsolution} induces that
  \begin{equation}\label{eq573203}
    a_{1}=-\frac{\beta^2+\beta+1}{\beta^2},\quad
    a_{2}+a_{3}=\frac{\beta^2+\beta+1}{\beta^3},\quad
    a_{4}=-\frac{1}{\beta^3}.
  \end{equation}
  It follows from $c^{\left(1\right)}\left(1\right)=c^{\left(1\right)}\left(\beta^2\right)=0$ that
  \begin{equation*}
  \left\{
  \begin{array}{l}
    a_{1}+2\,a_{2}+\left(5i+2\right)a_{3}+\left(5i+3\right)a_{4}=0,\\[2mm]
    a_{1}+2\,a_{2}\,\beta^2+\left(5i+2\right)a_{3}\,\beta^2
    +\left(5i+3\right)a_{4}\,\beta^4=0.\\
  \end{array}
  \right.
  \end{equation*}
  This yields
  \begin{equation*}
    \left(\beta^2-1\right)a_{1}+\left(5i+3\right)\left(\beta^2-\beta^4\right)a_{4}=0.
  \end{equation*}
  By substituting $\left(\ref{eq573203}\right)$, one can deduce that
  \begin{equation*}
    \beta^2-\left(5i+2\right)\beta+1=0.
  \end{equation*}
  Let $t=5i+2$.
  Then $\beta^2=t\beta-1$ and
  \begin{equation*}
    \beta^4+\beta^3+\beta^2+\beta+1=(t\beta-1)(t^2+t-1)=0.
  \end{equation*}
  It follows that $t^2+t=1$.
  By $c^{\left(2\right)}\left(1\right)=0$ and (\ref{eq573203}), we get
  \begin{equation*}
    5i\left(t+1\right)a_{3}=\left(t+2\right)\beta+1.
  \end{equation*}
  The fact $c^{\left(1\right)}\left(1\right)=0$ indicates $5i\,a_{3}=\left(2-t\right)\left(\beta+1\right)$.
  Hence
  \begin{equation*}
    \left(t+2\right)\beta+1=\left(t+1\right)\left(2-t\right)\left(\beta+1\right).
  \end{equation*}
  This leads to $t^2\,\beta-2\,t=0$ due to $t^2+t=1$.
  It follows from $t\ne 0$ that $t\beta=2$ and $\beta^2=t\beta-1=1$, a contradiction.

  (4) Consider the subcase $c(x)=1+a_{1}\,x+a_{2}\,x^2+a_{3}\,x^{5i+3}+a_{4}\,x^{5i+4}$ with $1\leq i\leq p-2$ and $a_{1},\,a_{2},\,a_{3},\,a_{4}\in \mathbb{F}_{p}^{*}$.
  It can be checked that
  \begin{equation*}
  \left\{
  \begin{array}{l}
    1+a_{1}+a_{2}+a_{3}+a_{4}=0,\\[2mm]
    1+a_{1}\,\beta+a_{2}\,\beta^2+a_{3}\,\beta^3+a_{4}\,\beta^4=0,\\[2mm]
    1+a_{1}\,\beta^2+a_{2}\,\beta^4+a_{3}\,\beta+a_{4}\,\beta^3=0\\
  \end{array}
  \right.
  \end{equation*}
  since $c\left(1\right)=c\left(\beta\right)=c\left(\beta^2\right)=0$.
  Then one can derive that
  \begin{equation}\label{eq573204}
  \left\{
  \begin{array}{l}
    a_{1}=-\beta^3\,a_{4}+\beta^2+\beta,\\[2mm]
    a_{2}=-\left(\beta^4+1\right)a_{4}-\beta-1,\\[2mm]
    a_{3}=-\left(\beta^2+\beta+1\right)a_{4}-\beta^2.\\
  \end{array}
  \right.
  \end{equation}
  Let $t=5i+2$.
  By $c^{\left(1\right)}\left(1\right)=c^{\left(1\right)}\left(\beta^2\right)=0$ and (\ref{eq573204}), we have
  \begin{equation}\label{eq57320401}
  \left\{
  \begin{array}{l}
    t\beta^2+\beta+2=\left((t-1)\beta^4+t\beta^3+t\right)a_{4},\\[2mm]
    2\beta^2+\beta+t=\left(t\beta^2+(t-1)\beta+t\right)a_{4}.\\
  \end{array}
  \right.
  \end{equation}
  Then
  \begin{equation*}
    (t\beta^2+\beta+2)\left(t\beta^2+(t-1)\beta+t\right)=(2\beta^2+\beta+t)\left((t-1)\beta^4+t\beta^3+t\right)
  \end{equation*}
  which implies
  \begin{equation*}
    (t^2+t-1)(\beta^2-1)=0.
  \end{equation*}
  Thus $t^2+t=1$.
  It follows from $c^{\left(2\right)}\left(1\right)=0$ that $2\,a_{2}+a_{3}+\left(2t+3\right)a_{4}=0$.
  Together with $\left(\ref{eq573204}\right)$, one can immediately get
  \begin{equation*}
    \left(-\beta^4+\beta^3+2t+1\right)a_{4}=\beta^2+2\beta+2.
  \end{equation*}
  Combining with the second equality in (\ref{eq57320401}), we can obtain
  \begin{equation*}
    \left(-\beta^4+\beta^3+2t+1\right)\left(2\beta^2+\beta+t\right)
    =\left(\beta^2+2\beta+2\right)\left(t\beta^2+(t-1)\beta+t\right).
  \end{equation*}
  By expanding the above equality, one can deduce
  \begin{equation*}
    \left(\beta^2-1\right)t+3\beta^2+2=0
  \end{equation*}
  which yields $t=\frac{3\beta^2+2}{1-\beta^2}$.
  The fact $t^2+t-1=0$ induces
  \begin{equation*}
    \left(\frac{3\beta^2+2}{1-\beta^2}\right)^2+\frac{3\beta^2+2}{1-\beta^2}-1=0
  \end{equation*}
  which is equivalent to
  \begin{equation*}
    \left(3\beta^2+2\right)^2+\left(3\beta^2+2\right)\left(1-\beta^2\right)-\left(1-\beta^2\right)^2=0.
  \end{equation*}
  It follows that
  \begin{equation*}
    \beta^4+3\,\beta^2+1=0
  \end{equation*}
  which indicates
  \begin{equation*}
    2\,\beta^2-\beta^3-\beta=0
  \end{equation*}
  due to $\beta^4+\beta^3+\beta^2+\beta+1=0$.
  Hence $\beta(\beta-1)^2=0$, which is impossible.

  (5) Consider the subcase $c(x)=1+a_{1}\,x+a_{2}\,x^2+a_{3}\,x^{5i+4}+a_{4}\,x^{5i+5}$ with $0\leq i\leq p-2$ and $a_{1},\,a_{2},\,a_{3},\,a_{4}\in \mathbb{F}_{p}^{*}$.
  It follows from $c\left(1\right)=c\left(\beta\right)=c\left(\beta^2\right)=0$ that $p\nmid \left(a_{4}+1\right)$ and
  \begin{equation}\label{eq573205}
    a_{1}=-\frac{1}{\beta}\left(a_{4}+1\right),\quad
    a_{2}=-\frac{\beta}{\beta+1}\left(a_{4}+1\right),\quad
    a_{3}=\frac{1}{\beta(\beta+1)}\left(a_{4}+1\right)
  \end{equation}
  due to Lemma \ref{lemsolution}.
  The fact $c^{\left(1\right)}\left(1\right)=c^{\left(1\right)}\left(\beta^2\right)=0$ leads to
  \begin{equation*}
  \left\{
  \begin{array}{l}
    a_{1}+2\,a_{2}+\left(5i+4\right)a_{3}+\left(5i+5\right)a_{4}=0,\\[2mm]
    a_{1}+2\,a_{2}\,\beta^2+\left(5i+4\right)a_{3}\,\beta
    +\left(5i+5\right)a_{4}\,\beta^3=0\\
  \end{array}
  \right.
  \end{equation*}
  which implies
  \begin{equation*}
  \left\{
  \begin{array}{l}
    \left(\beta^3-1\right)a_{1}+2\left(\beta^3-\beta^2\right)a_{2}
    +\left(5i+4\right)\left(\beta^3-\beta\right)a_{3}=0,\\[2mm]
    2\left(\beta+1\right)a_{2}+\left(5i+4\right)a_{3}+\left(5i+5\right)\left(\beta^2+\beta+1\right)a_{4}=0.\\
  \end{array}
  \right.
  \end{equation*}
  By substituting $\left(\ref{eq573205}\right)$, one can obtain that
  \begin{equation}\label{eq57320501}
    \beta^4+\left(5i+3\right)\beta^3-\left(5i+3\right)\beta-1=0
  \end{equation}
  and
  \begin{equation}\label{eq57320502}
    \left(-2\beta^2\left(\beta+1\right)+5i+4\right)\left(a_{4}+1\right)
    +\left(5i+5\right)\beta\left(\beta+1\right)\left(\beta^2+\beta+1\right)a_4=0.
  \end{equation}
  Let $t=5i+3$.
  It follows from (\ref{eq57320501}) that
  \begin{equation*}
    \beta^4-1+t\left(\beta^3-\beta\right)=\left(\beta^2-1\right)(\beta^2+1+t\,\beta)=0
  \end{equation*}
  which yields $\beta^2=-t\beta-1$.
  Then we have
  \begin{equation*}
    0=\beta^4+\beta^3+\beta^2+\beta+1=-\left(t^2-t-1\right)\beta^2
  \end{equation*}
  which indicates $t^2=t+1$ due to $\beta^2\ne 0$.
  It can be verified that
  \begin{equation*}
  \begin{split}
   &-2\beta^2\left(\beta+1\right)+5i+4+\left(5i+5\right)\beta\left(\beta+1\right)\left(\beta^2+\beta+1\right)
  \\
  =&-2\beta^3-2\beta^2+t+1-\left(t+2\right)\left(\beta^4+\beta+2\right)
  \\
  =&-2t\left(\beta+1\right)+2\left(t\beta+1\right)+\left(t+2\right)\left(\beta+t\right)-\left(t+2\right)\beta-t-3=0.
  \end{split}
  \end{equation*}
  Hence (\ref{eq57320502}) and $a_4\in \mathbb{F}_{p}^{*}$ induces
  \begin{equation*}
    0=-2\beta^2\left(\beta+1\right)+5i+4=3-t
  \end{equation*}
  which means that $t=3$ and $\beta^2=-3\beta-1$, a contradiction with the inequality (\ref{eqbelta}) in Lemma \ref{lemsolution}.
\end{itemize}

As a consequence, $\mathcal{C}$ is an MDS $\left(5p,\,8\right)_{p}$ symbol-pair code. The desired result follows.
\hfill $\Box$

\end{document}